\newcommand{\mb}[1]{{\bf #1}}
\renewcommand{\c}{\mb{c}}
\newcommand{\e}{\mb{e}}
\newcommand{\bv}{\mb{v}}
\newcommand{\A}{\mb{A}}
\newcommand{\B}{\mb{B}}
\newcommand{\D}{\mb{D}}
\newcommand{\R}{\mb{R}}
\newcommand{\HH}{{\mathcal H}}
\newcommand{\GG}{{\mathcal G}}
\newcommand{\I}{\mb{I}}
\newcommand{\J}{\mb{J}}
\newcommand{\KK}{{\mathcal K}}
\newcommand{\M}{\mb{M}}
\newcommand{\MM}{{\mathcal M}}
\renewcommand{\SS}{{\mathcal S}}
\newcommand{\U}{\mb{U}}
\newcommand{\UU}{{\mathcal U}}
\newcommand{\XX}{{\mathfrak X}}
\newcommand{\YY}{{\mathcal Y}}
\newcommand{\bLambda}{\mb{\Lambda}}
\newcommand{\bGamma}{\mb{\Gamma}}
\newcommand{\zero}{\mb{0}}
\newcommand{\beq}{\begin{equation}}
\newcommand{\eeq}{\end{equation}}
\newcommand{\bth}{{\boldsymbol \theta}}
\newcommand{\bdelta}{{\boldsymbol \delta}}
\newcommand{\bmu}{{\boldsymbol \mu}}
\newcommand{\eps}{\varepsilon}
\newcommand{\PG}{\mb{P}_{\GG}}
\newcommand{\PS}{\mb{P}_{\SS}}
\newcommand{\pdht}{\pd{h_\bth}{\bth}}
\newcommand{\Pdy}{P_{\bth_0}(dy)}
\newcommand{\CC}{{\mathbb C}}
\newcommand{\RR}{{\mathbb R}}
\newcommand{\ZZ}{{\mathbb Z}}
\newcommand{\MSE}{{\mathrm{MSE}}}
\newcommand{\E}[1]{{\mathbb E} \! \left\{ #1 \right\}}
\renewcommand{\Pr}[1]{\mathrm{Pr} \kern -1pt \left\{ #1 \right\}}
\newcommand{\pd}[2]{\frac{\partial #1}{\partial #2}}
\newcommand{\Nu}[1]{{{\mathcal N}\! \left( #1 \right) }}
\newcommand{\Ra}[1]{{{\mathcal R}\! \left( #1 \right) }}
\newcommand{\bra}{\left\langle}
\newcommand{\ket}{\right\rangle}
\newtheorem{theorem}{Theorem}
\newtheorem{lemma}{Lemma}
\DeclareMathOperator{\Cov}{Cov}
\DeclareMathOperator{\Tr}{Tr}
\DeclareMathOperator{\spn}{span}
\DeclareMathOperator{\sinc}{sinc}
\begin{document}

\title{Performance~Bounds and~Design~Criteria for~Estimating Finite~Rate~of~Innovation Signals}
\author{Zvika~Ben-Haim,~\IEEEmembership{Student~Member,~IEEE}, Tomer~Michaeli,~\IEEEmembership{Student~Member,~IEEE}, and~Yonina~C.~Eldar,~\IEEEmembership{Senior~Member,~IEEE}\thanks{%
The authors are with the Department of Electrical Engineering, Technion---Israel Institute of Technology, Haifa 32000, Israel (e-mail: \{zvikabh@tx, tomermic@tx, yonina@ee\}.technion.ac.il). This work was supported in part by a Magneton grant from the Israel Ministry of Industry and Trade, by the Israel Science Foundation under Grant 1081/07, and by the European Commission's FP7 Network of Excellence in Wireless COMmunications NEWCOM++ (grant agreement no.\ 216715).}}

\maketitle

\begin{abstract}
In this paper, we consider the problem of estimating finite rate of innovation (FRI) signals from noisy measurements, and specifically analyze the interaction between FRI techniques and the underlying sampling methods. We first obtain a fundamental limit on the estimation accuracy attainable regardless of the sampling method. Next, we provide a bound on the performance achievable using any specific sampling approach. Essential differences between the noisy and noise-free cases arise from this analysis. In particular, we identify settings in which noise-free recovery techniques deteriorate substantially under slight noise levels, thus quantifying the numerical instability inherent in such methods. This instability, which is only present in some families of FRI signals, is shown to be related to a specific type of structure, which can be characterized by viewing the signal model as a union of subspaces. Finally, we develop a methodology for choosing the optimal sampling kernels based on a generalization of the Karhunen--Lo\`eve transform. The results are illustrated for several types of time-delay estimation problems.
\end{abstract}

\begin{keywords}
Finite rate of innovation, Sampling, Cram\'er--Rao bound, Union of subspaces, Time-delay estimation
\end{keywords}

\section{Introduction}
\label{se:intro}

The field of digital signal processing hinges on the availability of techniques for sampling analog signals, thus converting them to discrete measurements. The sampling mechanism aims to preserve the information present in the analog domain, ideally permitting flawless recovery of the original signal. For example, one may wish to recover a continuous-time signal $x(t)$ from a discrete set of samples. The archetypical manifestation of this concept is the Shannon sampling theorem, which states that a $B$-bandlimited function can be reconstructed from samples taken at the Nyquist rate $2B$ \cite{shannon48}. 

Recently, considerable attention has been devoted to the extension of sampling theory to functions having a finite rate of innovation (FRI). These are signals determined by a finite number $\rho$ of parameters per time unit \cite{vetterli02}. Such a definition encompasses a rich variety of signals, including splines, shift-invariant signals, multiband signals, and pulse streams. In many FRI settings, several existing algorithms are guaranteed to recover the signal $x(t)$ from samples taken at rate $\rho$ \cite{vetterli02, maravic05, dragotti07, gedalyahu09, gedalyahu11, tur11}. In other words, signals which correspond to the FRI model can be reconstructed from samples taken at the rate of innovation, which is potentially much lower than their Nyquist rate.

The set of signals described by an FRI model can often be viewed as a union of subspaces \cite{gedalyahu09, gedalyahu11, tur11, eldar09}. For example, consider a stream of pulses parameterized by pulse locations and amplitudes. The set of all pulses having a given location is a subspace of the space of continuous-time functions. Thus, the set of all pulses having arbitrary locations is a union of such subspaces. As we will see, this point of view yields a flexible and productive framework for understanding the types of constraints implied by the model.

Real-world signals are often contaminated by continuous-time noise and thus do not conform precisely to the FRI model. Furthermore, like any mathematical model, the FRI framework is an approximation which does not precisely hold in practical scenarios, an effect known as mismodeling error \cite{eldar09}. It is therefore of interest to quantify the effect of noise and mismodeling errors on FRI techniques \cite{maravic05, berent10, tur11, gedalyahu11}. In the noisy case, it is no longer possible to perfectly recover the original signal from its samples. Nevertheless, one might hope for an appropriate finite-rate technique which achieves the best possible estimation accuracy, in the sense that increasing the sampling rate confers no further performance benefits. For example, to recover a $B$-bandlimited signal contaminated by continuous-time white noise, one can use an ideal low-pass filter with cutoff $B$ prior to sampling at a rate of $2B$. This strategy removes all noise components with frequencies larger than $B$, while leaving all signal components intact. Consequently, any alternative method which does not zero out frequencies above $B$ can be improved upon, whereas methods which zero out some of the signal frequencies can suffer from an arbitrarily large reconstruction error. Thus, sampling at a rate of $2B$ is indeed optimal in the case of a $B$-bandlimited signal, if the signal is corrupted by continuous-time noise prior to sampling. Sampling at a rate higher than $2B$ can be beneficial only when the sampling process itself introduces additional noise into the system, e.g., as a result of quantization.

By contrast, empirical observations indicate that, for some noisy FRI signals, substantial performance improvements are achievable when the sampling rate is increased beyond the rate of innovation \cite{dragotti07, tur11}. Thus, in some cases, there appears to be a fundamental difference between the noiseless and noise-corrupted settings, in terms of the required sampling rate. Our first goal in this paper will be to provide an analytical justification and quantification of these empirical findings. As we will see, the fact that oversampling improves performance is not merely indicative of flaws in existing algorithms; rather, it is a consequence of the inherent difficulty of reconstructing FRI signals under noise. Indeed, we will demonstrate that for some FRI signals, unless considerable oversampling is employed, performance will necessarily deteriorate by several orders of magnitude relative to the optimal achievable reconstruction capability. Such effects occur even when the noise level is exceedingly low. Our analysis will also enable us to identify and characterize the types of signals for which oversampling is necessary.

To demonstrate these results, we first derive the Cram\'er--Rao bound (CRB) for estimating a finite-duration segment of an FRI signal $x(t)$ directly from continuous-time measurements $y(t)=x(t)+w(t)$, where $w(t)$ is a Gaussian white noise process. This yields a lower bound on the accuracy whereby $x(t)$ can be recovered by any technique, regardless of its sampling rate. This setting is to be distinguished from previous bounds in the FRI literature \cite{stoica89, berent10} in three respects. First and most importantly, the measurements are a continuous-time process $y(t)$ and the bound therefore applies regardless of the sampling method. Second, in our model, the noise is added prior to sampling. Thus, as will be shown below, even sampling at an arbitrarily high rate will not completely compensate for the noise. Third, we bound the mean-squared error (MSE) in estimating $x(t)$ and not the parameters defining it, since we seek to determine the accuracy with which $x(t)$ itself can be recovered. Such a bound does not depend on the specific parametrization of the signal, and consequently, possesses a simpler analytical expression.

In practice, rather than processing the continuous-time signal $y(t)$, it is typically desired to estimate $x(t)$ from a discrete set of samples $\{c_n\}$ of $y(t)$. In this scenario, in addition to the continuous-time noise $w(t)$, digital noise may arise from the sampling process itself, for example due to quantization. To quantify the extent to which sampling degrades the ability to recover the signal, we next derive the CRB for estimating $x(t)$ from the measurements $\{c_n\}$. This analysis depends on the relative power of the two noise factors. When only digital noise is present, oversampling can be used to completely overcome its effect. On the other hand, when there exists only continuous-time noise, the bound converges to the continuous-time CRB as the sampling rate increases. In some cases, these bounds coincide at a finite sampling rate, which implies that the sampling scheme has captured all of the information present in the continuous-time signal, and any further increase in the sampling rate is useless. Conversely, when the continuous-time and sampled CRBs differ, the gap between these bounds is indicative of the degree to which information is lost in the sampling process. Our technique can then be used to plot the best possible performance as a function of the sampling rate, and thus provide the practitioner with a tool for evaluating the benefits of oversampling.

When a certain sampling technique achieves the performance of continuous-time measurements, it can be identified using the method described above. However, in some cases no such technique exists, or the sampling rate it requires may be prohibitive. In these cases, it is desirable to determine the optimal sampling scheme having an allowed rate. Since different signals are likely to perform successfully with different sampling kernels, a Bayesian or average-case analysis is well-suited for this problem. Specifically, we assume that the signal $x(t)$ has a known prior distribution over the class of signals, and determine the linear sampling and reconstruction technique which minimizes the MSE for recovering $x(t)$ from its measurements. While nonlinear reconstruction techniques are commonly used and typically outperform the best linear estimator, this approach provides a simple means for identifying an appropriate sampling method. The resulting method can then be used in conjunction with standard nonlinear FRI algorithms.

We demonstrate our results via the problem of estimating a finite-duration sequence of pulses having unknown positions and amplitudes \cite{vetterli02, dragotti07, gedalyahu09, tur11}. In this case, a simple sufficient condition is obtained for the existence of a sampling scheme whose performance bound coincides with the continuous-time CRB\@. This scheme is based on sampling the Fourier coefficients of the pulse shape, and is reminiscent of recent time-delay estimation algorithms \cite{tur11}. However, while the sampling scheme is theoretically sufficient for optimal recovery of $x(t)$, we show that in some cases there is room for substantial improvement in the reconstruction stage of these algorithms. Finally, we demonstrate that the Fourier domain is also optimal (in the sense of minimizing the reconstruction MSE) when the sampling budget is limited. Specifically, given an allowed number of samples $N$, the reconstruction MSE is minimized by sampling the $N$ highest-variance Fourier coefficients of the signal $x(t)$.

The rest of this paper is organized as follows. The problem setting is defined in Section~\ref{se:setting}, and some examples of signals conforming to this model are presented in Section~\ref{se:types}. We then briefly summarize our main results in Section~\ref{se:summary}. In Section~\ref{se:math}, we provide a technical generalization of the CRB to general spaces. This result is used to obtain bounds on the achievable reconstruction error from continuous-time measurements (Section~\ref{se:cont-time}) and using a sampling mechanism (Section~\ref{se:sampled}). Next, in Section~\ref{se:bayesian} a Bayesian viewpoint is introduced and utilized to determine the optimal sampling kernels having a given rate budget. The results are demonstrated for the specific signal model of time-delay estimation in Section~\ref{se:app}.

\section{Definitions}
\label{se:setting}

\subsection{Notation}

The following notation is used throughout the paper. A boldface lowercase letter $\bv$ denotes a vector, while a boldface uppercase letter $\M$ denotes a matrix. $\I_N$ is the $N \times N$ identity matrix. For a vector $\bv$, the notation $\|\bv\|$ indicates the Euclidean norm. Given a complex number $z \in \CC$, the symbols $z^*$ and $\Re\{z\}$ denote the complex conjugate and the real part of $z$, respectively. For an operator $P$, the range space and null space are $\Ra{P}$ and $\Nu{P}$, respectively, while the trace and adjoint are denoted, respectively, by $\Tr(P)$ and $P^*$. The Kronecker delta, denoted by $\delta_{m,n}$, equals $1$ when $m=n$ and $0$ otherwise. The expectation of a random variable $v$ is written as $\E{v}$.

The Hilbert space of square-integrable complex-valued functions over $[0,T_0]$ is denoted $L_2[0,T_0]$ or simply $L_2$. The corresponding inner product is
\beq
\bra f, g \ket \triangleq \int_0^{T_0} f(t) g^*(t) dt
\eeq
and the induced norm is $\|f\|^2_{L_2} \triangleq \bra f,f \ket$. For an ordered set of $K$ functions $g_1,\ldots,g_K$ in $L_2$, we define the associated \emph{set transformation} $G:\CC^K\rightarrow L_2$ as
\begin{equation}\label{eq:set}
(G\bv)(t) = \sum_{k=1}^K \bv_k g_k(t).
\end{equation}
By the definition of the adjoint, it follows that
\begin{equation}\label{eq:set_adjoint}
G^*f = (\bra f,g_1 \ket , \ldots, \bra f,g_K \ket)^T.
\end{equation}

\subsection{Setting}

In this work, we are interested in the problem of estimating FRI signals from noisy measurements. To define FRI signals formally, let the $T_0$-local number of degrees of freedom $N_{T_0}(t)$ of a signal $x(t)$ at time $t$ be the number of parameters defining the segment $\{ x(t) : t \in [t-T_0/2,t+T_0/2] \}$. The $T_0$-local rate of innovation of $x(t)$ is then defined as \cite{vetterli02}
\begin{equation} \label{eq:def rho}
\rho_{T_0} = \max_{t\in\RR} \frac{N_{T_0}(t)}{T_0}.
\end{equation}
We then say that $x(t)$ is an FRI signal if $\rho_{T_0}$ is finite for all sufficiently large values of $T_0$. In Section~\ref{se:types}, we will give several examples of FRI signals and compute their rates of innovation.

For concreteness, let us focus on the problem of estimating the finite-duration segment $\{ x(t) : t \in [0,T_0] \}$, for some constant $T_0$, and let $K \triangleq N_{T_0}(T_0/2)$ denote the number of parameters defining this segment. We then have
\beq \label{eq:x(t) parametric}
x \in \XX \triangleq \{ h_\bth \in L_2[0,T_0] : \bth \in \Theta \}
\eeq
where $h_\bth$ is a set of functions parameterized by the vector $\bth$, and $\Theta$ is an open subset of $\RR^K$.

We wish to examine the random process
\beq \label{eq:y=x+w}
y(t) = x(t) + w(t), \quad t \in [0,T_0]
\eeq
where $w(t)$ is continuous-time white Gaussian noise. Recall that formally, it is not possible to define Gaussian white noise over a continuous-time probability space \cite{kuo02}. Instead, we interpret \eqref{eq:y=x+w} as a simplified notation for the equivalent set of measurements
\beq \label{eq:brownian}
z(t) = \int_0^t x(\tau) d\tau + \sigma_c b(t), \quad t \in [0,T_0]
\eeq
where $b(t)$ is a standard Wiener process (also called Brownian motion) \cite{ibragimov81}. It follows that $w(t)$ can be considered as a random process such that, for any $f,g \in L_2$, the inner products $a=\bra f,w\ket$ and $b=\bra g,w\ket$ are zero-mean jointly Gaussian random variables satisfying $\E{ab^*}=\sigma_c^2\bra f,g\ket$ \cite{kuo02}. The subscript $c$ in $\sigma_c$ is meant as a reminder of the fact that $w(t)$ is continuous-time noise. By contrast, when examining samples of the random process $y(t)$, we will also consider digital noise which is added during the sampling process.

In this paper, we consider estimators which are functions either of the entire continuous-time process \eqref{eq:y=x+w} or of some subset of the information present in \eqref{eq:y=x+w}, such as a discrete set of samples of $y(t)$. To treat these two cases in a unified way, let $(\Omega,{\mathscr F})$ be a measurable space and let $\{P_\bth : \bth \in \Theta \}$ be a family of probability measures over $(\Omega,{\mathscr F})$. Let $(\YY,{\mathscr U})$ be a measurable space, and let the random variable $y : \Omega \rightarrow \YY$ denote the measurements. This random variable can represent either $y(t)$ itself or samples of this quantity.

An estimator can be defined in this general setting as a measurable function $\hat{x} : \YY \rightarrow L_2$. The MSE of an estimator $\hat{x}$ at $x$ is defined as
\beq \label{eq:MSEdef}
\MSE(\hat{x},x) \triangleq \E{\|\hat{x}-x\|_{L_2}^2} = \E{\int_0^{T_0} |\hat{x}(t)-x(t)|^2 dt}.
\eeq
An estimator $\hat{x}$ is said to be unbiased if
\beq \label{eq:def unbiased}
\E{\hat{x}(t)} = x(t) \text{ for all $x \in \XX$ and almost all $t \in [0,T_0]$.}
\eeq

In the next section, we demonstrate the applicability of our model by reviewing several scenarios which can be formulated using the FRI framework. Some of these settings will also be used in the sequel to exemplify our theoretical results.

\section{Types of FRI Signals}
\label{se:types}

Numerous FRI signal structures have been proposed and analyzed in the sampling literature. Whereas most of these can be treated within our framework, some FRI structures do not conform exactly to our problem setting. Thus, before delving into the derivation of the CRB, we first provide examples for scenarios that can be analyzed via our model and discuss some of its limitations.

\subsection{Shift-Invariant Spaces}
\label{ss:sis}
Consider the class of signals that can be expressed as
\begin{equation}\label{eq:SI}
x(t) = \sum_{m\in\ZZ}a[m]g(t-mT)
\end{equation}
with some arbitrary square-integrable sequence $\{a[m]\}_{m \in \ZZ}$, where $g(t)$ is a given pulse in $L_2(\RR)$ and $T>0$ is a given scalar. This set of signals is a linear subspace of $L_2(\RR)$, which is often termed a \emph{shift-invariant} (SI) space \cite{unser00, eldar09e}. The class of functions that can be represented in the form \eqref{eq:SI} is quite large. For example, choosing $g(t)=\sinc(t/T)$ leads to the subspace of $\pi/T$-bandlimited signals. Other important examples include the space of spline functions (obtained by letting $g(t)$ be a B-spline function) and communication signals such as pulse-amplitude modulation (PAM) and quadrature amplitude modulation (QAM). Reconstruction in SI spaces from noiseless samples has been addressed in \cite{Unser94,eldar06b} and extended to the noisy setting in \cite{unser05,eldar06c,ramani08}.

Intuitively, every signal lying in a SI space with spacing $T$ has one degree of freedom per $T$ seconds (corresponding to one coefficient from the sequence $\{a[m]\}$). It is thus tempting to regard the rate of innovation of such signals as $1/T$. However, this is only true in an asymptotic sense and for compactly supported pulses $g(t)$. For any finite window size $T_0$, the $T_0$-local rate of innovation $\rho_{T_0}$ is generally larger. Specifically, suppose that the support of $g(t)$ is contained in $[t_a,t_b]$ and consider intervals of the form $[t,t+MT]$, where $M$ is an integer. Then, due to the overlaps of the pulses, for any such interval we can only assure that there are no more than $M+\lceil(t_b-t_a)/T\rceil$ coefficients affecting the values of $x(t)$. Thus, the $MT$-local rate of innovation of signals of the form \eqref{eq:SI} is given by
\begin{equation}\label{eq:rhoSI}
\rho_{MT}=\frac{1}{T}\left(1+\frac{\lceil\frac{t_b-t_a}{T}\rceil}{M}\right).
\end{equation}
In particular, signals of the form \eqref{eq:SI} having a generator $g(t)$ which is not compactly supported have an infinite $T_0$-local rate of innovation, for any finite $T_0$. This is the case, for example, with bandlimited signals, which are therefore not FRI functions under our definition. As will be discussed in the sequel, this is not a flaw of the definition we use for the rate of innovation. Rather, it reflects the fact that it is impossible to recover any finite-duration segment $[T_1,T_2]$ of such signals from a finite number of measurements.

\subsection{Nonlinearly-Distorted Shift-Invariant Spaces}
In certain communication scenarios, nonlinearities are introduced in order to avoid amplitude clipping, an operation known as companding \cite{Landau61}. When the original signal lies in a SI space, the resulting transmission takes the form
\begin{equation}\label{eq:nlSI}
x(t) = r\!\left(\sum_{m\in\ZZ}a[m]g(t-mT)\right),
\end{equation}
where $r(\cdot)$ is a nonlinear, invertible function. Clearly, the $MT$-local rate of innovation $\rho_{MT}$ of this type of signals is the same as that of the underlying SI function, and is thus given by \eqref{eq:rhoSI}. The recovery of nonlinearly distorted SI signals from noiseless samples was treated in \cite{Landau61,Sandberg63,Dvorkind08,Faktor10}. We are not aware of research works treating the noisy case.

\subsection{Union of Subspaces}
\label{ss:union}
Much of the FRI literature treats signal classes which are unions of subspaces \cite{eldar09, gedalyahu09, gedalyahu11, mishali09}. We now give examples of a few of these models.

\subsubsection{Finite Union of Subspaces}
There are various situations in which a continuous-time signal is known to belong to one of a finite set of spaces. One such signal model is described by
\begin{equation}
x(t) = \sum_{m\in\ZZ}\sum_{k=1}^K a_k[m] g_k(t-mT),
\end{equation}
where $\{g_k(t)\}_{k=1}^K$ are a set of generators. In this model it is assumed that only $L<K$ out of the $K$ sequences $\{a_1[m]\}_{m\in\ZZ},\ldots,\{a_K[m]\}_{m\in\ZZ}$ are not identically zero \cite{eldar09b}. Therefore, the signal $x(t)$ is known to reside in one of $\binom{K}{L}$ spaces, each of which is spanned by an $L$-element subset of the set of generators $\{g_k(t)\}_{k=1}^K$. This class of functions can be used to describe multiband signals \cite{mishali09,mishali09b}. However, the discrete nature of these models precludes analysis using the differential tools employed in the remainder of this paper. Therefore, in this work we will focus on infinite unions of subspaces.

\subsubsection{Single-Burst Channel Sounding}
In certain medium identification and channel sounding scenarios, the echoes of a transmitted pulse $g(t)$ are analyzed to identify the positions and reflectance coefficients of scatterers in the medium \cite{proakis95, tur11}. In these cases, the received signal has the form
\begin{equation} \label{eq:nonper}
x(t) = \sum_{\ell=1}^L a_\ell g(t-t_\ell),
\end{equation}
where $L$ is the number of scatterers and the amplitudes $\{a_\ell\}_{\ell=1}^L$ and time-delays $\{t_\ell\}_{\ell=1}^L$ correspond to the reflectance and location of the scatterers. Such signals can be thought of as belonging to a union of subspaces, where the parameters $\{t_\ell\}_{\ell=1}^L$ determine an $L$-dimensional subspace, and the coefficients $\{a_\ell\}_{\ell=1}^L$ describe the position within the subspace. In contrast with the previous example, however, in this setting we have a union of an infinite number of subspaces, since there are infinitely many possible values for the parameters $t_1, \ldots, t_L$.

In this case, for any window of size $T_0>\max_\ell\{t_\ell\}-\min_\ell\{t_\ell\}$, the $T_0$-local rate of innovation is given by
\begin{equation}\label{eq:rhoSingleBurst}
\rho_{T_0} = \frac{2L}{T_0}.
\end{equation}

\subsubsection{Periodic Channel Sounding}
Occasionally, channel sounding techniques consist of repeatedly probing the medium \cite{bruckstein85}. Assuming the medium does not change throughout the experiment, the result is a periodic signal
\begin{equation} \label{eq:per}
x(t) = \sum_{m\in\ZZ}\sum_{\ell=1}^L a_\ell g(t-t_\ell-mT).
\end{equation}
As before, the set $\XX$ of feasible signals is an infinite union of finite-dimensional subspaces in which $\{t_\ell\}_{\ell=1}^L$ determine the subspace and $\{ a_\ell \}_{\ell=1}^L$ define the position within the subspace. The $T_0$-local rate of innovation in this case coincides with \eqref{eq:rhoSingleBurst}.

\subsubsection{Semi-Periodic Channel Sounding}
There are situations in which a channel consists of $L$ paths whose amplitudes change rapidly, but the time delays can be assumed constant throughout the duration of the experiment \cite{bruckstein85, gedalyahu09, bajwa10}. In these cases, the output of a channel sounding experiment will have the form
\begin{equation} \label{eq:semiper}
x(t) = \sum_{m\in\ZZ}\sum_{\ell=1}^L a_\ell[m] g(t-t_\ell-mT),
\end{equation}
where $a_\ell[m]$ is the amplitude of the $\ell$th path at the $m$th probing experiment. This is, once again, a union of subspaces, but here each subspace is infinite-dimensional, as it is determined by the infinite set of parameters $\{a_\ell[m]\}$. In this case, the $MT$-local rate of innovation can be shown to be
\begin{equation}\label{eq:rhoKfir}
\rho_{MT}=\frac{L}{T}\left(1+\frac{1+\lceil\frac{t_b-t_a}{T}\rceil}{M}\right).
\end{equation}

\subsubsection{Multiband Signals}
Multiuser communication channels are often characterized by a small number of utilized subbands interspersed by large unused frequency bands \cite{mishali09b}. The resulting signal can be described as
\begin{equation} \label{eq:multiband}
x(t) = \sum_{n\in\ZZ}\sum_{\ell=1}^L a_\ell[n] g(t-nT) e^{j\omega_\ell t},
\end{equation}
where $\{a_\ell[n]\}_{n\in\ZZ}$ is the data transmitted by the $\ell$th user, and $\omega_\ell$ is the corresponding carrier frequency. In some cases the transmission frequencies are unknown \cite{mishali09,mishali09b}, resulting in an infinite union of infinite-dimensional subspaces. This setting is analogous in many respects to the semi-periodic channel sounding case; in particular, the $MT$-local rate of innovation can be shown to be the same as that given by \eqref{eq:rhoKfir}.

\section{Summary of Main Results}
\label{se:summary}

Before delving into the mathematical details, we provide in this section a high-level overview of our main contributions and summarize the resulting conclusions.

The overarching objective of this paper is to design and analyze sampling schemes for reconstructing FRI signals from noisy measurements. This goal is accomplished in three stages. First, we identify the best achievable MSE for estimating an FRI signal $x(t)$ from its continuous-time measurements $y(t) = x(t) + w(t)$, providing a fundamental lower bound which is independent of the sampling method. We then compare this continuous-time bound with the lowest possible MSE for a given sampling scheme, thus measuring the loss entailed in any particular technique. Finally, we provide a mechanism for choosing the optimal sampling kernels (in a specific Bayesian sense) utilizing a pre-specified sampling rate budget. Our results can be applied to specific families of FRI signals, but they also yield some general conclusions as to the relative difficulty of various classes of estimation problems. These general observations are summarized below.

\subsection{Continuous-Time Bound}

Our first goal in this paper is to derive the continuous-time CRB, which defines a fundamental limit on the accuracy with which an FRI signal can be estimated, regardless of the sampling technique. This bound turns out to have a particularly simple closed form expression which depends on the rate of innovation, but not on the class $\XX$ of FRI signals being estimated. Specifically, under suitable regularity conditions, the MSE of any unbiased estimator $\hat{x}$ satisfies
\beq
\frac{1}{T_0} \MSE(\hat{x},x) \ge \rho_{T_0} \sigma_c^2.
\eeq
Thus, the rate of innovation can be given a new interpretation as the ratio between the best achievable MSE and the noise variance $\sigma_c^2$. This is to be contrasted with the characterization of the rate of innovation in the noise-free case as the lowest sampling rate allowing for perfect recovery of the signal; indeed, when noise is present, perfect recovery is no longer possible.

\subsection{Bound for Sampled Measurements}

We next consider lower bounds for estimating $x(t)$ from \emph{samples} of the signal $y(t)$. In this setting, the samples inherit the noise $w(t)$ embedded in the signal $y(t)$, and may suffer from additional discrete-time noise, for example, due to quantization. We derive the CRB for estimating $x(t)$ from sampled measurements in the presence of both types of noise. However, the combination of the two noise models complicates the mathematical analysis. Consequently, since the sampling noise model has been previously analyzed \cite{stoica89, berent10}, we focus in this paper on the assumption that the discrete-time noise is negligible.

In this setting, the sampled CRB can be designed so as to converge to the continuous-time bound as the sampling rate increases. Moreover, if the family $\XX$ of FRI signals is contained in a finite-dimensional subspace $\MM$ of $L_2$, then a sampling scheme achieving the continuous-time CRB can be constructed. Such a sampling scheme is obtained by choosing kernels which span the subspace $\MM$, and yields samples which fully capture the information present in the signal $y(t)$. Contrariwise, if $\XX$ is not contained in a finite-dimensional subspace, then no finite-rate sampling method achieves the continuous-time CRB. In this case, any increase in the sampling rate can improve performance, and the continuous-time bound is obtained only asymptotically.

It is interesting to examine this distinction from a union of subspaces viewpoint. Suppose that, as in the examples of Section~\ref{ss:union}, the family $\XX$ can be described as a union of an infinite number of subspaces $\{ \UU_\alpha \}$ indexed by the continuous parameter $\alpha$, so that
\beq
\XX = \bigcup_\alpha \UU_\alpha.
\eeq
In this case, a finite sampling rate captures all of the information present in the signal if and only if
\beq \label{eq:sum subsp}
\operatorname{dim} \left( \sum_\alpha \UU_\alpha \right) < \infty
\eeq
where $\operatorname{dim}(\MM)$ is the dimension of the subspace $\MM$. By contrast, in the noise-free case, it has been previously shown \cite{lu08b} that the number of samples required to recover $x(t)$ is given by
\beq \label{eq:max dim}
\max_{\alpha_1, \alpha_2} \operatorname{dim}(\UU_{\alpha_1} + \UU_{\alpha_2}),
\eeq
i.e., the largest dimension among sums of \emph{two} subspaces belonging to the union. In general, the dimension of \eqref{eq:sum subsp} will be much higher than \eqref{eq:max dim}, illustrating the qualitative difference between the noisy and noise-free settings. For example, if the subspaces $\UU_\alpha$ are finite-dimensional, then \eqref{eq:max dim} is also necessarily finite, whereas \eqref{eq:sum subsp} need not be.

Nevertheless, one may hope that the structure embodied in $\XX$ will allow \emph{nearly} optimal recovery using a sampling rate close to the rate of innovation. This is certainly the case in many noise-free FRI settings. For example, there exist techniques which recover the pulse stream \eqref{eq:nonper} from samples taken at the rate of innovation, despite the fact that in this case $\XX$ is typically not contained in a finite-dimensional subspace. However, this situation often changes when noise is added, in which case standard techniques improve considerably under oversampling. This empirical observation can be quantified using the CRB: as we show, the CRB for samples taken at the rate of innovation is substantially higher in this case than the optimal, continuous-time bound. This demonstrates that the sensitivity to noise is a fundamental aspect of estimating signals of the form \eqref{eq:nonper}, rather than a limitation of existing algorithms. On the other hand, other FRI models, such as the semi-periodic pulse stream \eqref{eq:semiper}, exhibit considerable noise resilience, and indeed in these cases the CRB converges to the continuous-time value much more quickly.

As we discuss in Section~\ref{ss:nonper semiper}, the different levels of robustness to noise can be explained when the signal models are examined in a union of subspaces context. In this case, the parameters $\bth$ defining $x(t)$ can be partitioned into parameters defining the subspace $\UU_\alpha$ and parameters pinpointing the position within the subspace. Our analysis hints that estimation of the position within a subspace is often easier than estimation of the subspace itself. Thus, when most parameters are used to select an intra-subspace position, estimation at the rate of innovation is successful, as occurs in the semi-periodic case \eqref{eq:semiper}. By contrast, when a large portion of the parameters define the subspace in use, a sampling rate higher than the rate of innovation is necessary; this is the case in the non-periodic pulse stream \eqref{eq:nonper}, wherein $\bth$ is evenly divided among subspace-selecting parameters $\{t_\ell\}$ and intra-subspace parameters $\{a_\ell\}$. Thus we see that the CRB, together with the union of subspaces viewpoint, provide valuable insights into the relative degrees of success of various FRI estimation techniques.

\subsection{Choosing the Sampling Kernels}

In some cases, one may choose the sampling system according to design specifications, leading to the question: What sampling kernels should be chosen given an allotted number of samples? We tackle this problem by adopting a Bayesian framework, wherein the signal $x(t)$ is a random process distributed according to a known prior distribution. We further assume that both the sampling and reconstruction techniques are linear. While nonlinear reconstruction methods are often used for estimating FRI signals, this assumption is required for analytical tractability, and is used only for the purpose of identifying sampling kernels. Once these kernels are chosen, they can be used in conjunction with nonlinear reconstruction algorithms.

Under these assumptions, we identify the sampling kernels yielding the minimal MSE\@. An additional advantage of our assumption of linearity is that in this case, the optimal kernels depend only on the autocorrelation
\beq
R_X(t,\tau) = \E{x(t) x^*(\tau)}
\eeq
of the signal $x(t)$, rather than on higher-order statistics. Indeed, given a budget of $N$ samples, the optimal sampling kernels are given by the $N$ eigenfunctions of $R_X$ corresponding to the $N$ largest eigenvalues. This is reminiscent of the Karhunen--Lo\`eve transform (KLT), which can be used to identify the optimal sampling kernels in the noiseless setting. However, in our case, shrinkage is applied to the measurements prior to reconstruction, as is typically the case with Bayesian estimation of signals in additive noise.

A setting of particular interest occurs when the autocorrelation $R_X$ is cyclic, in the sense that
\beq
R_X(t,\tau) = R_X((t-\tau) \operatorname{mod} T)
\eeq
for some $T$. This scenario occurs, for example, in the periodic pulse stream \eqref{eq:per} and the semi-periodic pulse stream \eqref{eq:semiper}, assuming a reasonable prior distribution on the parameters $\bth$. It is not difficult to show that the eigenfunctions of $R_X$ are given, in this case, by the complex exponentials
\beq \label{eq:expon}
\psi_n(t) = \frac{1}{\sqrt{T}} e^{j \frac{2\pi}{T} n t}, \quad n \in \ZZ.
\eeq
Furthermore, in the case of the periodic and semi-periodic pulse streams, the magnitudes of the eigenvalues of $R_X$ are directly proportional to the magnitudes of the respective Fourier coefficients of the pulse shape $g(t)$. It follows that the optimal sampling kernels are the exponentials \eqref{eq:expon} corresponding to the largest Fourier coefficients of $g(t)$. This result is encouraging in light of recently proposed FRI reconstruction techniques which utilize exponential sampling kernels \cite{gedalyahu11}, and demonstrates the suitability of the Bayesian approach for designing practical estimation kernels.

\section{Mathematical Prerequisites: CRB for General Parameter Spaces}
\label{se:math}

In statistics and signal processing textbooks, the CRB is typically derived for parameters belonging to a finite-dimensional Euclidean space \cite{kay93, lehmann98, shao03}. However, this result is insufficient when it is required to estimate a parameter $x$ belonging to other Hilbert spaces, such as the $L_2$ space defined above. When no knowledge about the structure of the parameter $x$ is available, a bound for estimating $x(t)$ from measurements contaminated by \emph{colored} noise was derived in \cite{vantrees68}. However, this bound does not hold when the noise $w(t)$ is white. Indeed, in the white noise case, it can be shown that no finite-MSE unbiased estimators exist, unless further information about $x(t)$ is available. For example, the naive estimator $\hat{x}(t) = y(t)$ has an error $\hat{x}(t) - x(t)$ equal to $w(t)$, whose variance is infinite.

In our setting, we are given the additional information that $x$ belongs to the constraint set $\XX$ of \eqref{eq:x(t) parametric}. To the best of our knowledge, the CRB has not been previously defined for any type of constraint set $\XX \in L_2$, a task which will be accomplished in the present section. As we show below, a finite-valued CRB can be constructed by requiring unbiasedness only within the constraint set $\XX$, as per \eqref{eq:def unbiased}. As we will see, the CRB increases linearly with the dimension of the manifold $\XX$. Thus, in particular, the CRB is infinite when $\XX = L_2$. However, for FRI signals, the dimension of $\XX$ is finite by definition, implying that a finite-valued CRB can be constructed. Although the development of this bound invokes some deep concepts from measure theory, it is a direct analog of the CRB for finite-dimensional parameters \cite[Th.~2.5.15]{lehmann98}.

To derive the bound in the broadest setting possible, in this section we temporarily generalize the scenario of Section~\ref{se:setting}, and consider estimation of a parameter $x$ belonging to an arbitrary measurable and separable Hilbert space $\HH$. The MSE of an estimator $\hat{x}$ in this setting is defined as $\MSE(\hat{x},x) = \E{\|\hat{x}-x\|_\HH^2}$. The concept of bias can similarly be extended if one defines the expectation $\E{v}$ of a random variable $v : \Omega \rightarrow \HH$ as an element $k \in \HH$ such that $\bra k, \varphi \ket = \E{ \bra v, \varphi \ket }$ for any $\varphi \in \HH$. If no such element exist, the expectation is said to be undefined.

The derivation of the CRB requires the existence of a ``probability density'' $p_\bth(y)$ (more precisely, a Radon--Nikodym derivative) which is differentiable with respect to $\bth$, and such that its differentiation with respect to $\bth$ can be interchanged with integration with respect to $y$. The CRB also requires the mapping $h_\bth$ between $\bth$ and $x$ to be non-redundant and differentiable. The formal statement of these regularity conditions is specified below. For the measurement setting \eqref{eq:y=x+w}, with reasonable mappings $h_\bth$, these conditions are guaranteed to hold, as we will demonstrate in the sequel; in this section, however, we list these conditions in full so that a more general statement of the CRB will be possible.

\begin{enumerate}[P1)]
\item \label{ass:dominating measure}
There exists a value $\bth_0 \in \Theta$ such that the measure $P_{\bth_0}$ dominates $\{ P_\bth : \bth \in \Theta \}$. In other words, there exists a Radon--Nikodym derivative $p_\bth(y) \triangleq dP_\bth / dP_{\bth_0}$ such that, for any event $A \in {\mathscr U}$,
\beq
P_\bth(A) = \int_A p_\bth(y) \Pdy.
\eeq

\item \label{ass:cont diff}
For all $y$ such that $p_\bth(y)>0$, the functions $p_\bth(y)$ and $\log p_\bth(y)$ are continuously differentiable with respect to $\bth$. We denote by $\partial p_\bth(y) / \partial \bth$ and $\partial \log p_\bth(y) / \partial \bth$ the column vectors of the gradients of these two functions.

\item \label{ass:common support}
The support $\{ y \in \YY : p_\bth(y) > 0 \}$ of $p_\bth(y)$ is independent of $\bth$.

\item \label{ass:dominating func}
There exists a measurable function $q : \YY \times \Theta \rightarrow \RR$ such that for all sufficiently small $\Delta > 0$, for all $i=1,\ldots,K$, for all $y$, and for all $\bth$,
\beq \label{eq:dominated}
\frac 1 \Delta \left| p_{\bth + \Delta \e_i}(y) - p_\bth(y) \right| \le q(y,\bth)
\eeq
and such that for all $\bth$,
\beq \label{eq:q sq int}
\int q^2(y,\bth) \Pdy < \infty.
\eeq
In \eqref{eq:dominated}, $\e_i$ represents the $i$th column of the $K\times K$ identity matrix.

\item \label{ass:fim}
For each $\bth$, the $K \times K$ Fisher information matrix (FIM)
\beq \label{eq:def Jbth}
\J_\bth \triangleq \E{ \left( \pd{\log p_\bth(y)}{\bth} \right) \left( \pd{\log p_\bth(y)}{\bth} \right)^* }
\eeq
is finite and invertible.

\item \label{ass:h diff}
$h_\bth$ is Fr\'echet differentiable with respect to $\bth$, in the sense that for each $\bth$, there exists a continuous linear operator $\partial h_\bth / \partial \bth : \RR^K \rightarrow L_2$ such that, for any sufficiently small $\bdelta \in \RR^K$,
\beq \label{eq:pdht frechet}
\frac{h_{\bth + \bdelta} - h_\bth}{\|\bdelta\|} = \pdht \bdelta + o(\|\bdelta\|) \quad \text{as $\|\bdelta\|\rightarrow 0$}.
\eeq

\item \label{ass:h nullspace}
The null space of the mapping $\partial h_\bth / \partial \bth$ contains only the zero vector. This assumption is required to ensure that the mapping from $\bth$ to $x$ is non-redundant, in the sense that there does not exist a parametrization of $\XX$ in which the number of degrees of freedom is smaller than $K$.
\end{enumerate}

We are now ready to state the CRB for the estimation of a parameter $x \in L_2[0,T]$ parameterized by a finite-dimensional vector $\bth$. The proof of this theorem is given in Appendix~\ref{ap:th:x from bth}.

\begin{theorem} \label{th:x from bth}
Let $\bth \in \Theta$ be a deterministic parameter, where $\Theta$ is an open set in $\RR^K$. Let $\HH$ be a measurable, separable Hilbert space and let $h_\bth$ be a mapping from $\Theta$ to $\HH$. Let $\{ P_\bth : \bth \in \Theta \}$ be a family of probability measures over a measurable space $(\Omega,{\mathscr F})$, and let $y : \Omega \rightarrow \YY$ be a random variable, where $\YY$ is a measurable Hilbert space. Assume regularity conditions P\ref{ass:dominating measure}--P\ref{ass:h diff}. Let $\hat{x} : \YY \rightarrow \HH$ be an unbiased estimator of $x$ from the measurements $y$ such that
\beq \label{eq:finite energy}
\E{\|\hat{x}(y)\|_\HH^2} < \infty.
\eeq
Then, the MSE of $\hat{x}$ satisfies
\beq \label{eq:th:x from bth}
\MSE(\hat{x},x) \ge \Tr\left[ \left( \pdht \right)^* \left( \pdht \right) \J_\bth^{-1} \right]
\eeq
where $\J_\bth$ is the FIM \eqref{eq:def Jbth}.
\end{theorem}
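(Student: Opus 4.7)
The plan is to reduce the infinite-dimensional bound to the classical finite-dimensional vector-parameter CRB by probing the estimator along arbitrary directions in $\HH$. For any fixed $\varphi \in \HH$, the scalar $\bra \hat{x}(y), \varphi \ket$ is an estimator of the scalar parameter $\bra h_\bth, \varphi \ket$; the Hilbert-space definition of expectation together with the unbiasedness of $\hat{x}$ makes it unbiased, while \eqref{eq:finite energy} and the Cauchy--Schwarz inequality ensure that it has finite second moment. Fr\'echet differentiability (P\ref{ass:h diff}) yields
$$\pd{\bra h_\bth, \varphi \ket}{\bth} = \left( \pdht \right)^{*} \varphi.$$

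Under regularity conditions P\ref{ass:dominating measure}--P\ref{ass:fim}, the classical CRB for a differentiable (possibly complex-valued) scalar functional of a real vector parameter then gives
$$\E{\left| \bra \hat{x} - h_\bth, \varphi \ket \right|^2} \ge \varphi^{*} \left( \pdht \right) \J_\bth^{-1} \left( \pdht \right)^{*} \varphi.$$
Next, I would invoke separability of $\HH$ to fix an orthonormal basis $\{ \varphi_n \}$ and use Parseval's identity to write $\MSE(\hat{x},x) = \sum_n \E{\left| \bra \hat{x} - h_\bth, \varphi_n \ket \right|^2}$, where the interchange of $\E{\cdot}$ and $\sum_n$ is justified by Tonelli's theorem applied to the nonnegative integrand. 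Summing the scalar bound over $n$ and invoking the identity $\sum_n \bra B \varphi_n, \varphi_n \ket = \Tr(B)$ with $B = \left( \pdht \right) \J_\bth^{-1} \left( \pdht \right)^{*}$ then yields
$$\MSE(\hat{x},x) \ge \Tr\!\left[ \left( \pdht \right) \J_\bth^{-1} \left( \pdht \right)^{*} \right] = \Tr\!\left[ \left( \pdht \right)^{*} \left( \pdht \right) \J_\bth^{-1} \right],$$
where the final equality uses cyclicity of the trace; this is precisely \eqref{eq:th:x from bth}.

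The main obstacle is the careful verification that the classical scalar CRB transfers cleanly to each projected estimator in this abstract setting. One must confirm that $\bra \hat{x}(y), \varphi \ket$ inherits the regularity required by the scalar bound, handle the possibly complex-valued functionals (by decomposing into real and imaginary parts or appealing to a complex-parameter CRB), and invoke P\ref{ass:dominating measure}--P\ref{ass:dominating func} through a dominated-convergence argument to interchange differentiation with integration. These conditions ensure that the score $\partial \log p_\bth / \partial \bth$ is a valid $L_2$ random vector whose covariance is $\J_\bth$, after which the scalar bound follows from the standard Cauchy--Schwarz argument applied to $\Cov\!\left( \bra \hat{x}, \varphi \ket,\, \partial \log p_\bth / \partial \bth \right)$. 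Assumption P\ref{ass:h nullspace} is not strictly needed for the inequality itself, but ensures an identifiable parametrization compatible with the invertibility of $\J_\bth$ demanded by P\ref{ass:fim}.
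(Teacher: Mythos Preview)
Your proposal is correct but takes a somewhat different route from the paper. The paper works at the \emph{operator level}: it defines $u=\hat{x}-h_\bth\in\HH$ and $v=\partial\log p_\bth/\partial\bth\in\RR^K$, proves an abstract Cauchy--Schwarz inequality $\E{uu^*}\succeq\E{uv^*}(\E{vv^*})^{-1}\E{vu^*}$ for Hilbert-space valued random variables (via a Schur-complement lemma on the joint ``covariance'' operator), verifies the identity $\E{uv^*}=\partial h_\bth/\partial\bth$ directly using P\ref{ass:dominating func} and dominated convergence, and only then takes a trace. You instead project onto each basis direction $\varphi_n$, invoke the classical scalar CRB coordinate-wise, and sum via Parseval. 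Your route is more elementary and modular, since it outsources the core inequality to the textbook CRB; the paper's route is self-contained and yields the stronger intermediate operator inequality $\E{(\hat{x}-h_\bth)(\hat{x}-h_\bth)^*}\succeq(\partial h_\bth/\partial\bth)\,\J_\bth^{-1}(\partial h_\bth/\partial\bth)^*$, from which your per-coordinate bound follows by sandwiching with $\varphi_n$. The technical burden is essentially the same in both cases: the dominated-convergence step that justifies interchanging $\partial/\partial\theta_j$ with integration (your acknowledged ``main obstacle'') is exactly what the paper carries out in detail when establishing $\E{uv^*}=\partial h_\bth/\partial\bth$, so you have correctly identified where the work lies. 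Your remark that P\ref{ass:h nullspace} is not needed for the inequality itself is also accurate; the theorem statement assumes only P\ref{ass:dominating measure}--P\ref{ass:h diff}.
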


Theorem~\ref{th:x from bth} enables us to obtain a lower bound on the estimation error of $x$ based on the FIM for estimating $\bth$. The latter can often be computed relatively easily since $\bth$ is a finite-dimensional vector. Even more conveniently, the trace on the right-hand side of \eqref{eq:th:x from bth} is taken over a $K \times K$ matrix, despite the involvement of continuous-time operators. Thus, the computation of \eqref{eq:th:x from bth} is often possible either analytically or numerically, a fact which will be used extensively in the sequel.

\section{CRB for Continuous-Time Measurements}
\label{se:cont-time}

We now apply Theorem~\ref{th:x from bth} to the problem of estimating a deterministic signal $x$ from continuous-time measurements $y$ given by \eqref{eq:y=x+w}.

\begin{theorem} \label{th:crb x from y}
Let $x$ be a deterministic function defined by \eqref{eq:x(t) parametric}, where $\bth \in \Theta$ is an unknown deterministic parameter and $\Theta$ is an open subset of $\RR^K$. Suppose that Assumptions P\ref{ass:h diff}--P\ref{ass:h nullspace} are satisfied. Then, the MSE of any unbiased, finite-variance estimator $\hat{x}$ is bounded by
\beq \label{eq:th:crb x from y}
\MSE(\hat{x},x) \ge K \sigma_c^2.
\eeq
\end{theorem}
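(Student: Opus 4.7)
The plan is to invoke Theorem~\ref{th:x from bth} with $\HH = L_2[0,T_0]$ and the measurement space $\YY$ corresponding to the Wiener-process realization $z(t)$ of \eqref{eq:brownian}, and then show that the right-hand side of \eqref{eq:th:x from bth} evaluates precisely to $K\sigma_c^2$. The key observation is that once the Fisher information matrix $\J_\bth$ is identified as $\sigma_c^{-2}(\partial h_\bth/\partial\bth)^*(\partial h_\bth/\partial\bth)$, the trace in \eqref{eq:th:x from bth} collapses to $\sigma_c^2 \Tr(\I_K) = K\sigma_c^2$.

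The first task is to set up a rigorous probability model satisfying P\ref{ass:dominating measure}--P\ref{ass:fim}. Working with the equivalent Brownian-motion formulation \eqref{eq:brownian}, the measure $P_\bth$ induced by $z(t)$ on $C[0,T_0]$ is absolutely continuous with respect to the measure $P_{\bth_0}$ obtained by replacing $h_\bth$ with $h_{\bth_0}$, by the Cameron--Martin--Girsanov theorem. The Radon--Nikodym derivative has the explicit log-likelihood form
\begin{equation}
\log p_\bth(y) = \frac{1}{\sigma_c^2}\bra y, h_\bth - h_{\bth_0} \ket - \frac{1}{2\sigma_c^2}\left(\|h_\bth\|_{L_2}^2 - \|h_{\bth_0}\|_{L_2}^2\right),
\end{equation}
where $\bra y, \cdot \ket$ is interpreted as a stochastic integral against $dz$. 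The support is all of $C[0,T_0]$ independent of $\bth$ (condition P\ref{ass:common support}), and continuous differentiability in $\bth$ (condition P\ref{ass:cont diff}) together with the domination requirement P\ref{ass:dominating func} follow from the Fr\'echet differentiability P\ref{ass:h diff} of $h_\bth$ combined with standard moment bounds on Gaussian exponentials.

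Next, differentiating the log-likelihood yields
\begin{equation}
\pd{\log p_\bth(y)}{\bth} = \frac{1}{\sigma_c^2}\left(\pdht\right)^*(y - h_\bth),
\end{equation}
where I have used the adjoint characterization \eqref{eq:set_adjoint} extended to the Fr\'echet derivative. Since $y - h_\bth = w$ is white noise with covariance functional $\E{\bra f,w\ket \bra g,w\ket^*} = \sigma_c^2 \bra f,g \ket$, a direct computation gives
\begin{equation}
\J_\bth = \frac{1}{\sigma_c^2}\left(\pdht\right)^*\left(\pdht\right).
\end{equation}
Assumption P\ref{ass:h nullspace} guarantees that this matrix is positive definite, verifying P\ref{ass:fim}.

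Having established all regularity conditions, Theorem~\ref{th:x from bth} applies, and the right-hand side of \eqref{eq:th:x from bth} becomes
\begin{equation}
\Tr\!\left[\left(\pdht\right)^*\!\!\left(\pdht\right) \cdot \sigma_c^2\!\left(\left(\pdht\right)^*\!\!\left(\pdht\right)\right)^{\!-1}\right] = \sigma_c^2\Tr(\I_K) = K\sigma_c^2,
\end{equation}
which is the claimed bound. The main obstacle is the first step: justifying the Cameron--Martin likelihood and verifying the measure-theoretic conditions P\ref{ass:dominating measure}--P\ref{ass:dominating func} for the continuous-time Gaussian model, since the ``density'' $p_\bth$ only exists as a Radon--Nikodym derivative on a suitable Wiener path space and does not admit a naive pointwise interpretation. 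The algebraic computation of the trace, by contrast, is immediate once $\J_\bth$ is in hand.
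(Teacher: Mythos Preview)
Your proposal is correct and follows essentially the same route as the paper: verify P\ref{ass:dominating measure}--P\ref{ass:fim} for the continuous-time Gaussian model, identify $\J_\bth = \sigma_c^{-2}(\partial h_\bth/\partial\bth)^*(\partial h_\bth/\partial\bth)$, and then apply Theorem~\ref{th:x from bth} so that the trace collapses to $K\sigma_c^2$. The only difference is cosmetic: the paper dispatches the regularity conditions and the FIM formula by citing \cite[Example~I.7.3]{ibragimov81}, whereas you sketch the Cameron--Martin/Girsanov argument explicitly.
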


The bound of Theorem~\ref{th:crb x from y} can be translated to units of the rate of innovation $\rho_{T_0}$ if we assume that the segment $[0,T_0]$ under analysis achieves the maximum \eqref{eq:def rho}, i.e., this is a segment containing the maximum allowed number of degrees of freedom. In this case $\rho_{T_0} = K/T_0$, and any unbiased estimator $\hat{x}(t)$ satisfies
\beq \label{eq:rho sigma^2}
\frac{\frac{1}{T_0} \E{\int_{0}^{T_0} |x(t)-\hat{x}(t)|^2 dt}}{\sigma_c^2} \ge \rho_{T_0}.
\eeq
In the noisy setting, $\rho_{T_0}$ loses its meaning as a lower bound on the sampling rate required for perfect recovery, since the latter is no longer possible at any sampling rate. On the other hand, it follows from \eqref{eq:rho sigma^2} that the rate of innovation gains an alternative meaning; namely, $\rho_{T_0}$ is a lower bound on the ratio between the average MSE achievable by any unbiased estimator and the noise variance $\sigma_c^2$, \emph{regardless of the sampling method}.

Before formally proving Theorem~\ref{th:crb x from y}, note that \eqref{eq:th:crb x from y} has an intuitive geometric interpretation. Specifically, the constraint set $\XX$ is a $K$-dimensional differential manifold in $L_2[0,T]$. In other words, for any point $x \in \XX$, there exists a $K$-dimensional subspace $\UU$ tangent to $\XX$ at $x$. We refer to $\UU$ as the feasible direction subspace \cite{ben-haim09b}: any perturbation of $x$ which remains within the constraint set $\XX$ must be in one of the directions in $\UU$. Formally, $\UU$ can be defined as the range space of $\partial h_\bth / \partial \bth$.

If one wishes to use the measurements $y$ to distinguish between $x$ and its local neighborhood, then it suffices to observe the projection of $y$ onto $\UU$. Projecting the measurements onto $\UU$ removes most of the noise, retaining only $K$ independent Gaussian components, each having a variance of $\sigma_c^2$. Thus we have obtained an intuitive explanation for the bound of $K \sigma_c^2$ in Theorem~\ref{th:crb x from y}. To formally prove this result, we apply Theorem~\ref{th:x from bth} to the present setting, as follows.

\begin{proof}[Proof of Theorem~\ref{th:crb x from y}]
The problem of estimating the parameters $\bth$ from a continuous-time signal $y(t)$ of the form \eqref{eq:y=x+w} was examined in \cite[Example~I.7.3]{ibragimov81}, where the validity of Assumptions P\ref{ass:dominating measure}--P\ref{ass:dominating func} was demonstrated. It was further shown that the FIM $\J_\bth^{\mathrm{cont}}$ for estimating $\bth$ from $y(t)$ is given by \cite[\emph{ibid.}]{ibragimov81}
\beq \label{eq:Jbth cont}
\J_\bth^{\mathrm{cont}} = \frac{1}{\sigma_c^2} \left(\pdht\right)^* \left(\pdht\right).
\eeq
Our goal will be to use \eqref{eq:Jbth cont} and Theorem~\ref{th:x from bth} to obtain a bound on estimators of the continuous-time function $x(t)$. To this end, observe that the FIM $\J_\bth^{\mathrm{cont}}$ is finite since, by Assumption~P\ref{ass:h diff}, the operator $\partial h_\bth / \partial \bth$ is a bounded operator into $L_2$. Furthermore, by Assumption~P\ref{ass:h nullspace}, $\partial h_\bth / \partial \bth$ has a trivial null space, and thus $\J_\bth^{\mathrm{cont}}$ is invertible. Therefore, Assumption~P\ref{ass:fim} has been demonstrated. We may consequently apply Theorem~\ref{th:x from bth}, which yields
\begin{align}
\MSE&(\hat{x},x) \notag\\
&\ge \sigma_c^2 \Tr \left[ \left( \pdht \right)^* \left( \pdht \right) \left( \left(\pdht\right)^* \left(\pdht\right) \right)^{-1} \right] \notag\\
&=   \sigma_c^2 \Tr \left( \I_{K} \right) \notag\\
&=   K\sigma_c^2
\end{align}
thus completing the proof.
\end{proof}

To illustrate the use of Theorem~\ref{th:crb x from y} in practice, let us consider as a simple example a signal $x(t)$ belonging to a finite-dimensional subspace $\GG$. Specifically, assume that
\beq \label{eq:x_in_G}
x(t) = \sum_{k=1}^K a_k g_k(t)
\eeq
for some coefficient vector $\bth=(a_1,\ldots,a_K)^T$ and a given set of linearly independent functions $\{g_k\}$ spanning $\GG$. This includes, for example, families of shift-invariant subspaces with a compactly supported generator (see Section~\ref{ss:sis}). From Theorem~\ref{th:crb x from y}, the MSE of any unbiased estimator of $x$ is bounded by $K \sigma_c^2$, where $K$ is the dimension of the subspace $\GG$. We now demonstrate that this bound is achieved by the unbiased estimator
\beq \label{eq:PG y}
\hat{x} = \PG\, y
\eeq
where $\PG$ is the orthogonal projector onto the subspace $\GG$.

To verify that \eqref{eq:PG y} achieves the CRB, let $G$ denote the set transformation \eqref{eq:set} associated with the functions $\{g_k\}_{k=1}^K$. One may then write $x = G\bth$ and $\PG = G (G^*G)^{-1} G^*$. Thus \eqref{eq:PG y} becomes
\begin{align}
\hat{x}
&= G (G^*G)^{-1} G^* G\bth + G (G^*G)^{-1} G^* w \notag\\
&= G\bth + \PG\, w
\end{align}
and therefore
\beq
\E{\|\hat{x} - x\|_{L_2}^2} = \E{\|\PG\, w\|_{L_2}^2}.
\eeq
Since $\GG$ is a $K$-dimensional subspace, it is spanned by a set of $K$ orthonormal\footnote{We require the new functions $u_1,\ldots,u_K$ since the functions $g_1, \ldots, g_K$ are not necessarily orthonormal. The choice of non-orthonormal functions $g_1, \ldots, g_K$ will prove useful in the sequel.} functions $u_1, \ldots, u_K \in L_2$. Thus
\beq
\E{\|\PG\, w\|_{L_2}^2} = \sum_{k=1}^K \E{\left| \bra w,u_k \ket \right|^2} = K \sigma_c^2
\eeq
which demonstrates that $\hat{x}$ indeed achieves the CRB in this case.

In practice, a signal is not usually estimated directly from its continuous-time measurements. Rather, the signal $y(t)$ is typically sampled and digitally manipulated. In the next section, we will compare the results of Theorem~\ref{th:crb x from y} with the performance achievable from sampled measurements, and demonstrate that in some cases, a finite-rate sampling scheme is sufficient to achieve the continuous-time bound of Theorem~\ref{th:crb x from y}.

\section{CRB for Sampled Measurements}
\label{se:sampled}

In this section, we consider the problem of estimating $x(t)$ of \eqref{eq:x(t) parametric} from a finite number of samples of the process $y(t)$ given by \eqref{eq:y=x+w}. Specifically, suppose our measurements are given by
\beq \label{eq:c_n}
c_n = \bra y, s_n \ket + v_n = \int_{0}^{T_0} y(t) s^*_n(t) dt + v_n, \quad n=1,\ldots,N
\eeq
where $\{s_n\}_{n=1}^N \subset L_2[0,T_0]$ are sampling kernels, and $v_n$ is a discrete white Gaussian noise process, independent of $w(t)$, having mean zero and variance $\sigma_d^2$. Note that the model \eqref{eq:c_n} includes both continuous-time noise, which is present in the signal $y(t) = x(t) + w(t)$ prior to sampling, and digital noise $v_n$, which arises from the sampling process, e.g., as a result of quantization. In this section, we will separately examine the effect of each of these noise components.

From \eqref{eq:y=x+w} and \eqref{eq:c_n}, it can be seen that the measurements $c_1,\ldots,c_N$ are jointly Gaussian with mean
\beq \label{eq:mu_n}
\mu_n \triangleq \E{c_n} = \bra x, s_n \ket
\eeq
and covariance
\beq \label{eq:gamma}
\Gamma_{ij} \triangleq \Cov(c_i,c_j) = \sigma_c^2 \bra s_i, s_j \ket + \sigma_d^2 \delta_{ij}.
\eeq

A somewhat unusual aspect of this estimation setting is that the choice of the sampling kernels $s_n(t)$ affects not only the measurements obtained, but also the statistics of the noise. One example of the impact of this fact is the following. Suppose first that no digital noise is present, i.e., $\sigma_d=0$, and consider a modified set of sampling kernels $\{ \tilde{s}_n(t) \}_{n=1}^N$ which are an invertible linear transformation of $\{ s_n(t) \}_{n=1}^N$, so that
\beq
\tilde{s}_n(t) = \sum_{i=1}^N \B_{ni} s_i(t)
\eeq
where $\B \in \RR^{N \times N}$ is an invertible matrix. Then, the resulting measurements $\tilde{\c}$ are given by $\tilde{\c} = \B \c$, and similarly the original measurements $\c$ can be recovered from $\tilde{\c}$. It follows that these settings are equivalent in terms of the accuracy with which $x$ can be estimated. In particular, the FIM for estimating $x$ in the two settings is identical \cite[Th.~I.7.2]{ibragimov81}.

When digital noise is present in addition to continuous-time noise, the sampling schemes $\{s_n(t)\}$ and $\{\tilde{s}_n(t)\}$ are no longer necessarily equivalent, since the gain introduced by the transformation $\B$ will alter the ratio between the energy of the signal and the digital noise. The two estimation problems are then equivalent if and only if $\B$ is a unitary transformation.

How should one choose the space $\SS = \spn\{s_1,\ldots,s_N\}$ spanned by the sampling kernels? Suppose for a moment that there exist elements in the range space of $\partial h_\bth/\partial \bth$ which are orthogonal to $\SS$. This implies that one can perturb $x$ in such a way that the constraint set $\XX$ is not violated, without changing the distribution of the measurements $\c$. This situation occurs, for example, when the number of measurements $N$ is smaller than the dimension $K$ of the parametrization of $\XX$. While it may still be possible to reconstruct some of the information concerning $x$ from these measurements, this is an undesirable situation from an estimation point of view. Thus we will assume henceforth that
\beq \label{eq:ident cond}
\Ra{\pdht} \cap \SS^\perp = \{ \zero \}.
\eeq

As an example of the necessity of the condition \eqref{eq:ident cond}, consider again the signal \eqref{eq:x_in_G}, which belongs to a $K$-dimensional subspace $\GG \subset L_2$ spanned by the functions $g_1,\ldots,g_K$. In this case it is readily seen that for any vector $\bv$
\beq \label{eq:pdht GG}
\pdht \bv = \sum_{k=1}^K \bv_k g_k(t).
\eeq
Since the functions $\{g_k\}$ span the space $\GG$, this implies that $\Ra{\partial h_\bth / \partial \bth} = \GG$, and therefore the condition \eqref{eq:ident cond} can be written as
\beq \label{eq:G cap Sperp}
\GG \cap \SS^\perp = \{ \zero \}
\eeq
which is a standard requirement in the design of a sampling system for signals belonging to a subspace $\GG$ \cite{eldar03}.

By virtue of Theorem~\ref{th:x from bth}, a lower bound on unbiased estimation of $x$ can be obtained by first computing the FIM $\J_\bth^{\mathrm{samp}}$ for estimating $\bth$ from $\c$. This yields the following result. For simplicity of notation, in this theorem we assume that the function $h_\bth$ and the sampling kernels $s_n$ are real. If complex sampling kernels are desired (as will be required in the sequel), the result below can still be used by translating each measurement to an equivalent pair of real-valued samples.

\begin{theorem}
\label{th:crb sampled}
Let $x$ be a deterministic real function defined by \eqref{eq:x(t) parametric}, where $\bth \in \Theta$ is an unknown deterministic parameter and $\Theta$ is an open subset of $\RR^K$. Assume regularity conditions P\ref{ass:h diff}--P\ref{ass:h nullspace}, and let $\hat{x}$ be an unbiased estimator of $x$ from the real measurements $\c = (c_1,\ldots,c_N)^T$ of \eqref{eq:c_n}.
Then, the FIM $\J_\bth^{\mathrm{samp}}$ for estimating $\bth$ from $\c$ is given by
\beq \label{eq:Jbth sampled}
\J_\bth^{\mathrm{samp}} = \left(\pdht\right)^* S \left( \sigma_c^2 S^* S + \sigma_d^2 \I_{N} \right)^{-1} S^* \left(\pdht\right)
\eeq
where $S$ is the set transformation corresponding to the functions $\{s_n\}_{n=1}^N$. If \eqref{eq:ident cond} holds, then $\J_\bth^{\mathrm{samp}}$ is invertible. In this case, any finite-variance, unbiased estimator $\hat{x}$ for estimating $x$ from $\c$ satisfies
\beq \label{eq:th:crb sampled}
\MSE(\hat{x},x) \ge \Tr \left[ \left(\pdht\right)^* \left(\pdht\right) (\J_\bth^{\mathrm{samp}})^{-1} \right].
\eeq
\end{theorem}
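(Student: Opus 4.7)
My plan is to first compute the Fisher information matrix $\J_\bth^{\mathrm{samp}}$ in closed form using the Gaussian structure of $\c$, then verify its invertibility under \eqref{eq:ident cond}, confirm the remaining probabilistic regularity conditions P\ref{ass:dominating measure}--P\ref{ass:dominating func} for the Gaussian sampled model, and finally invoke Theorem~\ref{th:x from bth} to obtain \eqref{eq:th:crb sampled}.

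To derive \eqref{eq:Jbth sampled}, I would start from the observation that, by \eqref{eq:mu_n}--\eqref{eq:gamma}, $\c$ is a real Gaussian vector with mean $\bmu(\bth)$ and covariance $\bGamma$ that does not depend on $\bth$. Using \eqref{eq:set_adjoint} one checks $(S^*S)_{ij}=\bra s_j,s_i\ket$, so $\bGamma=\sigma_c^2 S^*S+\sigma_d^2\I_N$. For a Gaussian family with $\bth$-independent covariance, the standard FIM formula reduces to
\begin{equation}
\J_\bth^{\mathrm{samp}}=\left(\pd{\bmu}{\bth}\right)^T\bGamma^{-1}\left(\pd{\bmu}{\bth}\right).
\end{equation}
Applying the Fr\'echet differentiability P\ref{ass:h diff} to $\mu_n(\bth)=\bra h_\bth,s_n\ket$ yields $\partial\mu_n/\partial\bth_k=\bra \partial h_\bth/\partial\bth_k,s_n\ket$, which by \eqref{eq:set_adjoint} is the $(n,k)$-entry of $S^*\pdht$. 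Hence $\partial\bmu/\partial\bth=S^*\pdht$, and substitution produces \eqref{eq:Jbth sampled}.

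Invertibility of $\J_\bth^{\mathrm{samp}}$ then follows by a two-step argument. First, $\bGamma$ is strictly positive definite whenever $\sigma_d>0$, and otherwise whenever the sampling kernels $\{s_n\}$ are linearly independent (an implicit non-redundancy requirement). Consequently $\bv^T\J_\bth^{\mathrm{samp}}\bv=\|\bGamma^{-1/2}S^*\pdht\,\bv\|^2$, so invertibility reduces to injectivity of $S^*\pdht$. Second, if $S^*\pdht\bv=0$, then $\pdht\bv\in\Nu{S^*}=\SS^\perp$; combining this with $\pdht\bv\in\Ra{\pdht}$, the identifiability condition \eqref{eq:ident cond} forces $\pdht\bv=0$, and then P\ref{ass:h nullspace} yields $\bv=0$.

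It remains to verify the regularity hypotheses of Theorem~\ref{th:x from bth} for the sampled model. Here the measurements form a Gaussian family with a smooth parametric mean and fixed covariance, so P\ref{ass:dominating measure}--P\ref{ass:dominating func} reduce to familiar differentiation-under-the-integral manipulations on a multivariate Gaussian density, analogous to the continuous-time calculation cited in the proof of Theorem~\ref{th:crb x from y}. Invoking Theorem~\ref{th:x from bth} with $\J_\bth=\J_\bth^{\mathrm{samp}}$ then immediately yields \eqref{eq:th:crb sampled}. The main obstacle I anticipate is the invertibility step: one must carefully bridge between the abstract, function-space identifiability condition \eqref{eq:ident cond} and a concrete finite-dimensional matrix inversion, while simultaneously accommodating the degenerate case $\sigma_d=0$ in which $\bGamma$ itself is invertible only under additional hypotheses on $S$.
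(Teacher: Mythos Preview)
Your proposal is correct and follows essentially the same route as the paper: compute $\bGamma=\sigma_c^2 S^*S+\sigma_d^2\I_N$ and $\partial\bmu/\partial\bth=S^*\pdht$ from the set-transformation identities, substitute into the Gaussian FIM formula, reduce invertibility of $\J_\bth^{\mathrm{samp}}$ to injectivity of $S^*\pdht$ via \eqref{eq:ident cond} and P\ref{ass:h nullspace}, and then invoke Theorem~\ref{th:x from bth}. Your explicit remark that invertibility of $\bGamma$ in the case $\sigma_d=0$ requires linear independence of the $\{s_n\}$ is a point the paper leaves implicit, but otherwise the two arguments coincide.
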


\begin{proof}
In the present setting, the FIM $\J_\bth^{\mathrm{samp}}$ is given by \cite{kay93}
\beq \label{eq:Jbth 1}
\J_\bth^{\mathrm{samp}} = \left( \pd{\bmu}{\bth} \right)^* \bGamma^{-1} \left( \pd{\bmu}{\bth} \right)
\eeq
where the matrix $\bGamma \in \RR^{N \times N}$ is defined by \eqref{eq:gamma} and the matrix $\partial \bmu / \partial \bth \in \RR^{N \times K}$ is given by
\beq
\left( \pd{\bmu}{\bth} \right)_{nk} = \pd{\mu_n}{\theta_k}
\eeq
with $\mu_n$ defined in \eqref{eq:mu_n}.

By the definition of the set transformation, the $ij$th element of the $N \times N$ matrix $S^* S$ is given by
\beq
(S^* S)_{ij} = \bra S \e_j, S \e_i \ket = \bra s_j, s_i \ket
\eeq
where $\e_i$ is the $i$th column of the $N \times N$ identity matrix. Therefore, we have
\beq \label{eq:bGamma simp}
\bGamma = \sigma_c^2 S^* S + \sigma_d^2 \I_{N}.
\eeq
Similarly, observe that
\begin{align}
\left( S^* \pdht \right)_{nk} &= \bra \pdht \tilde{\e}_k , S \e_n \ket = \bra \pd{h_\bth}{\theta_k}, s_n \ket
 = \pd{\mu_n}{\theta_k}
\end{align}
where $\tilde{\e}_k$ is the $k$th column of the $K \times K$ identity matrix. Thus
\beq \label{eq:bmu/bth}
\pd{\bmu}{\bth} = S^* \pdht.
\eeq
Substituting \eqref{eq:bGamma simp} and \eqref{eq:bmu/bth} into \eqref{eq:Jbth 1} yields the required expression \eqref{eq:Jbth sampled}.

We next demonstrate that if \eqref{eq:ident cond} holds, then $\J_\bth^{\mathrm{samp}}$ is invertible. To see this, note that from \eqref{eq:Jbth sampled} we have
\beq \label{eq:nullsp 1}
\Nu{\J_\bth^{\mathrm{samp}}} = \Nu{S^* \left(\pdht\right)}.
\eeq
Now, consider an arbitrary function $f \in \Ra{\partial h_\bth / \partial \bth}$. If \eqref{eq:ident cond} holds, then $f$ is not orthogonal to the subspace $\SS$. Therefore, $\bra f, s_n \ket \ne 0$ for at least one value of $n$, and thus by \eqref{eq:set_adjoint}, $S^*f\neq 0$. This implies that
\beq
\Nu{S^* \left(\pdht\right)} = \Nu{\pdht} = \{\zero\}.
\eeq
Combined with \eqref{eq:nullsp 1}, we conclude that $\Nu{\J_\bth^{\mathrm{samp}}} = \{\zero\}$. This demonstrates that $\J_\bth^{\mathrm{samp}}$ is invertible, proving Assumption P\ref{ass:fim}. Moreover, in the present setting, Assumptions P\ref{ass:dominating measure}--P\ref{ass:dominating func} are fulfilled for any value of $\bth_0$ \cite{lehmann98}. Applying Theorem~\ref{th:x from bth} yields \eqref{eq:th:crb sampled} and completes the proof.
\end{proof}

In the following subsections we draw several conclusions from Theorem~\ref{th:crb sampled}.

\subsection{Discrete-Time Noise}
Suppose first that $\sigma_c^2=0$, so that only digital noise is present. This setting has been analyzed previously \cite{stoica89, berent10}, and therefore only briefly examine the contrast with continuous-time noise. When only digital noise is present, its effects can be surmounted either by increasing the gain of the sampling kernels, or by increasing the number of measurements. These intuitive conclusions can be verified from Theorem~\ref{th:crb sampled} as follows. Assume that condition \eqref{eq:ident cond} holds, and consider the modified kernels $\tilde{s}_n(t) = 2s_n(t)$. The set transformation $\tilde{S}$ corresponding to the modified kernels is $\tilde{S} = 2S$, and since $\sigma_c^2=0$, this implies that the FIM obtained from the modified kernels is given by $\tilde{\J}_\bth^{\mathrm{samp}} = 4\J_\bth^{\mathrm{samp}}$. Thus, a sufficient increase in the sampling gain can arbitrarily increase $\J_\bth^{\mathrm{samp}}$ and consequently reduce the bound \eqref{eq:th:crb sampled} arbitrarily close to zero.

Of course, from a practical point of view, increasing the gain also increases the likelihood that the sampled signal will exceed the dynamic range of the quantizer. It is therefore not feasible to arbitrarily increase the sampling gain. As an alternative, it is possible to increase the number of measurements. For example, suppose one simply repeats each measurement twice. Let $S$ and $\tilde{S}$ denote the transformations corresponding to the original and doubled sets of measurements. It can then readily be seen from the definition of the set transformation \eqref{eq:set} and its adjoint \eqref{eq:set_adjoint} that $\tilde{S} \tilde{S}^* = 4 S S^*$. Consequently, by the same argument, in the absence of continuous-time noise one can achieve arbitrarily low error by repeated measurements, or more generally, by increasing the sampling rate.

\subsection{Continuous-Time Noise}
As we have seen, sampling noise can be mitigated by increasing the sampling rate. Furthermore, digital noise is inherently dependent on the sampling scheme being used. Since our goal is to determine the fundamental performance limits regardless of the sampling technique, we will focus here and in subsequent sections on continuous-time noise. Thus, suppose that $\sigma_d^2=0$, so that only continuous-time noise is present. In this case, as we now show, it is generally impossible to achieve arbitrarily low reconstruction error, regardless of the sampling kernels used; indeed, it is never possible to outperform the continuous-time CRB of Section~\ref{se:cont-time}, which is typically nonzero. To see this formally, observe first that in the absence of digital noise, the FIM for estimating $\bth$ can be written as
\begin{align} \label{eq:Jbth samp no dig noise}
\J_\bth^{\mathrm{samp}}
&= \frac{1}{\sigma_c^2} \left(\pdht\right)^* S \left( S^* S \right)^{-1} S^* \left(\pdht\right) \notag\\
&= \frac{1}{\sigma_c^2} \left(\pdht\right)^* \PS \left(\pdht\right)
\end{align}
where $\PS$ is the orthogonal projection onto the subspace $\SS$. It is insightful to compare this expression with the FIM $\J_\bth^{\mathrm{cont}}$ obtained from continuous-time measurements in \eqref{eq:Jbth cont}. In both cases, a lower bound on the MSE for unbiased estimation of $x$ was obtained from $\J_\bth$ by applying Theorem~\ref{th:x from bth}. Consequently, if it happens that $\J_\bth^{\mathrm{cont}} = \J_\bth^{\mathrm{samp}}$, then the continuous-time bound of Theorem~\ref{th:crb x from y} and the sampled bound of Theorem~\ref{th:crb sampled} coincide. Thus, if no digital noise is added, then it is possible (at least in terms of the performance bounds) that estimators based on the samples $\c$ will suffer no degradation compared with the ``ideal'' estimator based on the entire set of continuous-time measurements. This occurs if and only if $\Ra{\partial h_\bth / \partial \bth} \subseteq \SS$; in this case, the projection $\PS$ will have no effect on the FIM $\J_\bth^{\mathrm{samp}}$, which will then coincide with $\J_\bth^{\mathrm{cont}}$ of \eqref{eq:Jbth cont}. In the remainder of this section, we will discuss several cases in which this fortunate circumstance arises.

\subsection{Example: Sampling in a Subspace}
\label{ss:subspace achieve crb}
The simplest situation in which samples provide all of the information present in the continuous-time signal is the case in which $x(t)$ belongs to a $K$-dimensional subspace $\GG$ of $L_2$. This is the case, for example, when the signal lies in a shift-invariant subspace having a compactly supported generator (see Section~\ref{ss:sis}). As we have seen above (cf. \eqref{eq:pdht GG}), in this scenario $\partial h_\bth/\partial \bth$ is a mapping onto the subspace $\GG$. Assuming that there is no discrete-time noise, it follows from \eqref{eq:Jbth samp no dig noise} that the optimal choice of a sampling space $\SS$ is $\GG$ itself. Such a choice requires $N=K$ samples and yields $\J_\bth^{\mathrm{cont}} = \J_\bth^{\mathrm{samp}}$. Of course, such an occurrence is not possible if the sampling process contributes additional noise to the measurements.

In some cases, it may be difficult to implement a set of sampling kernels spanning the subspace $\GG$. It may then be desirable to choose a $K$-dimensional subspace $\SS$ which is close to $\GG$ but does not equal it. We will now compute the CRB for this setting and demonstrate that it can be achieved by a practical estimation technique. This will also demonstrate achievability of the CRB in the special case $\SS=\GG$. We first note from \eqref{eq:set} and \eqref{eq:pdht GG} that $\partial h_\bth / \partial \bth = G$, where $G$ is the set transformation corresponding to the generators $\{g_k\}_{k=1}^K$. Furthermore, it follows from \eqref{eq:G cap Sperp} that $S^* G$ and $G^* S$ are invertible $K \times K$ matrices \cite{eldar03}. Using Theorem~\ref{th:crb sampled}, we thus find that the CRB is given by
\begin{align} \label{eq:bound GS}
\MSE(\hat{x},x)
&\ge \sigma_c^2 \Tr\!\left( G \left( G^* S (S^* S)^{-1} S^* G \right)^{-1} G^* \right) \notag\\
&=   \sigma_c^2 \Tr\!\left( G (S^* G)^{-1} S^* S (G^* S)^{-1} G^* \right).
\end{align}
It is readily seen that when $\SS=\GG$, the bound \eqref{eq:bound GS} reduces to $K \sigma_c^2$, which is (as expected) the continuous-time bound of Theorem~\ref{th:crb x from y}. When $\SS \ne \GG$, the bound \eqref{eq:bound GS} will generally be higher than $K \sigma_c^2$, since $\J_\bth^{\mathrm{samp}}$ of \eqref{eq:Jbth samp no dig noise} will exceed $\J_\bth^{\mathrm{cont}}$ of \eqref{eq:Jbth cont}.
In this case, it is common to use the consistent, unbiased estimator \cite{Unser94, eldar03}
\beq
\hat{x} = G (S^* G)^{-1} \c.
\eeq
As we now show, the bound \eqref{eq:bound GS} is achieved by this estimator. Indeed, observe that $\c = S^* y = S^* G \bth + S^* w$, and thus
\begin{align} \label{eq:GS mse}
\E{\|\hat{x}-x\|_{L_2}^2}
&= \E{\|G (S^* G)^{-1} S^* w\|_{L_2}^2} \notag\\
&= \E{\Tr \! \left( G (S^* G)^{-1} S^* w w^* S (G^* S)^{-1} G^* \right)} \notag\\
&= \Tr \! \left( G (S^* G)^{-1} \Cov(S^* w) (G^* S)^{-1} G^* \right).
\end{align}
Note that $\Cov(S^* w) = \Cov(\c)$, which by \eqref{eq:bGamma simp} is equal to $\sigma_c^2 S^* S$. Substituting this result into \eqref{eq:GS mse} and comparing with \eqref{eq:bound GS} verifies that $\hat{x}$ achieves the CRB\@.

\subsection{Nyquist-Equivalent Sampling}
\label{ss:nyquist equiv}

We refer to situations in which the dimension of the sampling space equals the dimension of the signal space as ``Nyquist-equivalent'' sampling schemes. In the previous section, we saw that Nyquist-equivalent sampling is possible using $K$ samples when the signal lies in a $K$-dimensional subspace $\XX$, and that the resulting system achieves the continuous-time CRB\@. A similar situation occurs when the set of possible signals $\XX$ is a subset of an $M$-dimensional subspace $\MM$ of $L_2$ with $M>K$. In this case, it can be readily shown that $\Ra{\partial h_\bth / \partial \bth} \subseteq \MM$. Thus, by choosing $N=M$ sampling kernels such that $\SS = \MM$, we again achieve $\J_\bth^{\mathrm{cont}} = \J_\bth^{\mathrm{samp}}$, demonstrating that all of the information content in $x$ has been captured by the samples. This is again a Nyquist-equivalent scheme, but the number of samples it requires is higher than the number of parameters $K$ defining the signals. Therefore, in this case it is not possible to sample at the rate of innovation without losing some of the information content of the signal.

In general, the constraint set $\XX$ will not be contained in any finite-dimensional subspace of $L_2$. In such cases, it will generally not be possible to achieve the performance of the continuous-time bound using \emph{any} finite number of samples, even in the absence of digital noise. This implies that in the most general setting, sampling above the rate of innovation can often improve the performance of estimation schemes. This conclusion will be verified by simulation in Section~\ref{se:app}.


\section{Optimal Sampling Kernels: A Bayesian Viewpoint}
\label{se:bayesian}

In this section, we address the problem of designing a sampling method which minimizes the MSE. One route towards this goal could be to minimize the sampled CRB of Theorem~\ref{th:crb sampled} with respect to the sampling space $\SS$. However, the CRB is a function of the unknown parameter vector $\bth$. Consequently, for each value of $\bth$, there may be a different sampling space $\SS$ which minimizes the bound. To obtain a sampling method which is optimal on average over all possible choices of $\bth$, we now make the additional assumption that the parameter vector $\bth$ is random and has a known distribution. Our goal, then, is to determine the sampling space $\SS$ that minimizes the MSE $\E{\|\hat{x}-x\|_{L_2}^2}$ within a class of allowed estimators. Note that the mean is now taken over realizations of both the noise $w(t)$ and the parameter $\bth$.

Since $\bth$ is random, the signal $x(t)$ is random as well. To make our discussion general, we will derive the optimal sampling functions for estimating a general random process $x(t)$ (not necessarily having realizations in $\XX$ of \eqref{eq:x(t) parametric}) from samples of the noisy process $y(t)=x(t)+w(t)$. We will then specialize the results to several specific types of FRI signals and obtain explicit expressions for the optimal sampling kernels in these scenarios.

Let $x(t)$ denote a zero-mean random process defined over $t\in[0,T_0]$, and suppose that its autocorrelation function
\begin{equation}
R_{X}(t,\eta)\triangleq \E{x(t)x^*(\eta)}
\end{equation}
is continuous in $t$ and $\eta$. Our goal is to estimate $x(t)$ based on a finite number $N$ of samples of the signal $y(t)=x(t)+w(t)$, $t\in[0,T_0]$, where $w(t)$ is a white noise process (not necessarily Gaussian) with variance $\sigma_c^2$ which is uncorrelated with $x(t)$. We focus our attention on \emph{linear sampling} schemes, i.e., we assume the samples are given by
\beq \label{eq:c_n no sampling noise}
c_n = \bra y, s_n \ket.
\eeq
Finally, we restrict the discussion to \emph{linear estimation} methods, namely those techniques in which the estimate $\hat{x}(t)$ is constructed as
\begin{equation}\label{eq:hx}
\hat{x}(t) = \sum_{n=1}^N c_n v_n(t),
\end{equation}
for some set of reconstruction functions $\{v_n(t)\}_{n=1}^N$. It is important to note that for any given set of sampling functions $\{s_n(t)\}_{n=1}^N$, the minimum MSE (MMSE) estimator of $x(t)$ is often a nonlinear function of the measurements $\{c_n\}_{n=1}^N$. Indeed, typical FRI reconstruction techniques involve a nonlinear stage. Consequently, restricting the discussion to linear recovery schemes may seem inadequate. However, this choice has two advantages. First, as we will see, the optimal linear scheme is determined only by the second-order statistics of $x(t)$ and $w(t)$, whereas the analysis of nonlinear methods necessitates exact knowledge of their entire distribution functions. Second, it is not the final estimate $\hat{x}(t)$ that interests us in this discussion, but merely the set of optimal sampling functions. Once such a set is determined, albeit from a linear recovery perspective, it can be used in conjunction with existing nonlinear FRI techniques. As we will see in Section~\ref{se:app}, the conclusions obtained through our analysis appear to apply to FRI techniques in general. Under the above assumptions, our goal is to design the sampling kernels $\{s_n(t)\}_{n=1}^N$ and reconstruction functions $\{v_n(t)\}_{n=1}^N$ such that the MSE \eqref{eq:MSEdef} is minimized.

As can be seen from \eqref{eq:c_n no sampling noise}, we assume henceforth that only continuous-time noise is present in the sampling system. The situation is considerably more complicated in the presence of digital noise. First, without digital noise, one must choose only the subspace spanned by the sampling kernels, as the kernels themselves do not affect the performance; this is no longer the case when digital noise is added. Second, digital noise may give rise to a requirement that a particular measurement be repeated in order to average out the noise. This is undesirable in the continuous noise regime, since the repeated measurement will contain the exact same noise realization.

\subsection{Relation to the Karhunen--Lo\`eve Expansion and Finite-Dimensional Generalizations}

The problem posed above is closely related to the Karhunen--Lo\`eve transform (KLT) \cite{loeve48, loeve78}, which is concerned with the reconstruction of a random signal $x(t)$ from its noiseless samples. Specifically, one may express $x(t)$ in terms of a complete orthonormal basis $\{\psi_k(t)\}_{k=1}^\infty$ for $L_2$ as
\begin{equation}\label{eq:KLexpansion}
x(t) = \sum_{k=1}^\infty \bra x,\psi_k \ket  \psi_k(t).
\end{equation}
The goal of the KLT is to choose the functions $\{\psi_k(t)\}_{k=1}^\infty$ such that the MSE resulting from the truncation of this series after $N$ terms is minimal. It is well known that the solution to this problem is given by the $N$-term truncation of the Karhunen--Lo\`eve expansion \cite{loeve48,maitre10}.

Since $R_X(t,\eta)$ is assumed to be continuous in our setting, by Mercer's theorem \cite{maitre10} it possesses a discrete set of eigenfunctions $\{\psi_k(t)\}_{k=1}^\infty$, which constitute an orthonormal basis for $L_2$. These functions satisfy the equations
\begin{align}\label{eq:eigenfunctions}
\lambda_k \psi_k(t) = \int_0^{T_0} R_X(t,\eta) \psi_k(\eta) d\eta,
\end{align}
in which the corresponding eigenvalues $\lambda_1\geq\lambda_2\geq\cdots\geq 0$ are nonnegative and are assumed to be arranged in descending order. With these functions, \eqref{eq:KLexpansion} is known as the Karhunen--Lo\`eve expansion. It can be easily shown that the first $N$ terms in this series constitute the best $N$-term approximation of $x(t)$ in an MSE sense \cite{maitre10}. In other words, in the noiseless case, the optimal sampling and reconstruction functions are $s_n(t) = v_n(t) = \psi_n(t)$.

In our setting, we do not have access to samples of $x(t)$ but rather only to samples of the noisy process $y(t)$. In this case, it is not clear \emph{a priori} whether the optimal sampling and reconstruction filters coincide or whether they match the Karhunen--Lo\`eve expansion of $x(t)$.

The finite-dimensional analogue of our problem, in which $x$, $y$, and $w$ are random vectors taking values in $\CC^M$, was treated in \cite{yamashita96,hua98}. The derivation in these works, however, relied on the low-rank approximation property of the singular-value decomposition (SVD) of a matrix. The generalization of this concept to infinite-dimensional operators is subtle and will thus be avoided here. Instead, we provide a conceptually simple (if slightly cumbersome) derivation of the optimal linear sampling and reconstruction method for noisy signals. As we will see, it still holds that $s_n(t) = \psi_n(t)$, but $v_n(t) = \alpha_n \psi_n(t)$, where $\alpha_n$ is a shrinkage factor depending on the SNR of the $n$th sample.

\subsection{Optimal Sampling in Noisy Settings}

As explained in Section~\ref{se:sampled}, in the absence of discrete-time noise, the MSE is not affected by modifications of the sampling kernels which leave the set $\SS=\spn\{s_1(t),\ldots,s_N(t)\}$ unchanged. Thus, without loss of generality, we constrain $\{s_n(t)\}_{n=1}^N$ to satisfy
\begin{equation}\label{eq:s_const}
\left\langle s_n\,,\,\sigma_c^2 s_m + \int_{0}^{T}R_{X}(\cdot,\tau) s_m(\tau) d\tau \right\rangle = \delta_{m,n}
\end{equation}
for every $m,n=1,\ldots,N$. This can always be done since the operator $R_Y:L_2\rightarrow L_2$ defined by $(R_Yf)(t)=\int_0^T R_X(t,\tau)f(\tau)d\tau+\sigma_c^2f(t)$ is positive definite. This choice is particularly convenient as it results in a set of uncorrelated samples $\{c_n\}$. Indeed
\begin{align}\label{eq:cov_c}
\E{c_m c_n^*}&=\E{\left(\int_{0}^{T} s^*_m(\tau)y(\tau)d\tau\right)\left(\int_{0}^{T} s^*_n(\eta)y(\eta)d\eta\right)^*} \nonumber\\
&= \iint_{0}^{T} s^*_m(\tau) \E{y(\tau)y^*(\eta)} s_n(\eta) d\tau d\eta \nonumber\\
&= \iint_{0}^{T} s^*_m(\tau) R_{X}(\tau,\eta) s_n(\eta) d\tau d\eta \nonumber\\
&\hspace{3cm}+\E{\bra s_n,w \ket \bra s_m,w \ket^*}\nonumber\\
&= \iint_{0}^{T} s^*_m(\tau) R_{X}^*(\eta,\tau) s_n(\eta) d\tau d\eta + \sigma_c^2\left\langle s_n,s_m\right\rangle\nonumber\\
&= \delta_{m,n}.
\end{align}

We are now ready to determine the optimal sampling method. We begin by expressing the MSE \eqref{eq:MSEdef} as
\begin{align}\label{eq:MSE_explicit}
&\int_0^T  \E{\left|x(t)-\hat{x}(t)\right|^2} dt = \int_0^T \E{|x(t)|^2} dt \nonumber\\
&\hspace{1.25cm}  - 2\int_0^T\Re\left\{\E{x^*(t)\hat{x}(t)}\right\}dt  + \int_0^T\E{|\hat{x}(t)|^2}dt.
\end{align}
The first term in this expression does not depend on the choice of $\{s_n(t)\}_{n=1}^N$ and $\{v_n(t)\}_{n=1}^N$, and is therefore irrelevant for our purpose. Substituting \eqref{eq:hx} and \eqref{eq:c_n no sampling noise}, and using the fact that $w(t)$ is uncorrelated with $x(t)$, the second term can be written as
\begin{align}\label{eq:cov_xc}
&\int_0^T2\Re\left\{\E{x^*(t)\sum_{n=1}^N c_n v_n(t)}\right\}dt \nonumber\\
&=\sum_{n=1}^N 2\int_0^T\Re\left\{\E{x^*(t) \int_{0}^{T}\!\! y(\tau)s^*_n(\tau)  d\tau} v_n(t)\right\}dt \nonumber\\
&= \sum_{n=1}^N 2\!\iint_0^T\Re\left\{\E{x^*(t) (x(\tau)+w(\tau)) } s^*_n(\tau) v_n(t)\right\}d\tau dt \nonumber\\
&= \sum_{n=1}^N 2\iint_0^T\Re\left\{s^*_n(\tau) R_{X}(\tau,t) v_n(t)\right\} d\tau dt \nonumber\\
&= \sum_{n=1}^N 2\Re\left\{\left\langle v_n\,,\,\int_0^T R_{X}(\cdot,\tau)s_n(\tau)d\tau\right\rangle\right\}.
\end{align}
Similarly, using the fact that $\{c_n\}_{n=1}^N$ are uncorrelated and have unit variance (see \eqref{eq:cov_c}), the last term in \eqref{eq:MSE_explicit} becomes
\begin{align}\label{eq:last_term}
\int_0^T\E{\left|\sum_{n=1}^N c_n v_n(t)\right|^2}dt &= \sum_{m=1}^N \sum_{n=1}^N \E{c_m^*c_n } \bra v_m, v_n \ket \nonumber\\
&= \sum_{n=1}^N \|v_n\|^2.
\end{align}
Substituting \eqref{eq:cov_xc} and \eqref{eq:last_term} back into \eqref{eq:MSE_explicit}, we conclude that minimization of the MSE is equivalent to minimization of
\begin{align}\label{eq:MSE_simple}
&\sum_{n=1}^N \left(\|v_n\|^2-2\Re\left\{\left\langle v_n,\int_0^T\!\!R_{X}(\cdot,\tau)s_n(\tau)d\tau\right\rangle\right\}\right)
\end{align}
with respect to $\{s_n(t)\}_{n=1}^N$ and $\{v_n(t)\}_{n=1}^N$, subject to the set of constraints \eqref{eq:s_const}.

As a first stage, we minimize \eqref{eq:MSE_simple} with respect to the reconstruction functions $\{v_n(t)\}_{n=1}^N$. To this end, we note that the $n$th summand in \eqref{eq:MSE_simple} is lower bounded by
\begin{align}
&\|v_n\|^2 - 2\|v_n\|\left\|\int_0^T\!\!R_{X}(\cdot,\tau)s_n(\tau)d\tau\right\| \nonumber\\
&\hspace{3.5cm}\geq-\left\|\int_0^T\!\!R_{X}(\cdot,\tau)s_n(\tau)d\tau\right\|^2,
\end{align}
where we used the Cauchy--Schwarz inequality and the fact that $\min_{z}\{z^2-2bz\}=-b^2$. This bound is achieved by choosing
\begin{equation}\label{eq:v_opt}
v_n(t)= \int_0^T R_{X}(t,\tau) s_n(\tau)d\tau,
\end{equation}
thus identifying the optimal reconstruction functions.

Substituting \eqref{eq:v_opt} into \eqref{eq:MSE_simple}, our goal becomes to maximize
\begin{equation}\label{eq:MSE_sub_vn}
\sum_{n=1}^N\left\|\int_0^T R_{X}(\cdot,\tau)s_n(\tau)d\tau\right\|^2
\end{equation}
with respect to the sampling functions $\{s_n(t)\}_{n=1}^N$. As we show in Appendix~\ref{ap:max_MSE_sub_vn}, the maximum of this expression is achieved by any set of kernels of the form
\begin{equation}
s_n(t)=\sum_{k=1}^N \A_{k,n} \left(\lambda_k+\sigma_c^2\right)^{-\frac{1}{2}}\psi_k(t),
\end{equation}
where $\A$ is a unitary $N\times N$ matrix and $\lambda_k$ and $\psi_k(t)$ are the eigenvalues and eigenfunctions of $R_x(t,\eta)$ respectively (see \eqref{eq:eigenfunctions}). In particular, we can choose $\A=\I_N$, leading to
\begin{equation}\label{eq:opt_sn}
s_n(t) = \frac{1}{\sqrt{\lambda_n+\sigma_c^2}}\,\psi_n(t),\quad n=1,\ldots,N.
\end{equation}
From \eqref{eq:v_opt}, the optimal reconstruction kernels are given by
\begin{align}\label{eq:opt_vn}
v_n(t)=\frac{\lambda_n}{\sqrt{\lambda_n+\sigma_c^2}}\,\psi_n(t),\quad n=1,\ldots,N.
\end{align}
The following theorem summarizes the result.

\begin{theorem}\label{th:Bayesian}
Let $x(t)$, $t\in[0,T]$ be a random process whose autocorrelation function $R_X(t,\eta)$ is jointly continuous in $t$ and $\eta$. Assume that $y(t)=x(t)+w(t)$, where $w(t)$ is a white noise process uncorrelated with $x(t)$. Then, among all estimates $\hat{x}(t)$ of $x(t)$ having the form
\beq
\hat{x}(t)=\sum_{n=1}^Nv_n(t)\int_0^Ts^*_n(\tau)y(\tau)dt
\eeq
the MSE \eqref{eq:MSEdef} is minimized with $\{s_n(t)\}_{n=1}^N$ and $\{v_n(t)\}_{n=1}^N$ of \eqref{eq:opt_sn} and \eqref{eq:opt_vn} respectively. In these expressions, $\lambda_n$ and $\psi_n(t)$ are the eigenvalues and eigenfunctions of $R_x(t,\eta)$ respectively (see \eqref{eq:eigenfunctions}).
\end{theorem}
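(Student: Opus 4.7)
The plan is to reduce the problem to an optimization over two sets of functions and to solve it in stages. Since only continuous-time noise is present, the achievable MSE depends only on the subspace $\SS = \spn\{s_1,\ldots,s_N\}$, so I would first replace the given sampling kernels by a normalized set satisfying \eqref{eq:s_const}. This normalization is possible because the integral-plus-scalar operator $R_Y$ defined by $(R_Yf)(t)=\int_0^T R_X(t,\tau)f(\tau)d\tau+\sigma_c^2 f(t)$ is strictly positive definite (its scalar part is $\sigma_c^2 I$), so any basis of $\SS$ can be orthonormalized with respect to the inner product induced by $R_Y$. The payoff is that the resulting samples $\{c_n\}$ are uncorrelated with unit variance, which eliminates all cross terms in subsequent calculations.

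Next I would expand the MSE \eqref{eq:MSEdef} as $\int_0^T\E{|x(t)|^2}dt - 2\int_0^T \Re\{\E{x^*(t)\hat x(t)}\}dt + \int_0^T\E{|\hat x(t)|^2}dt$. The first term is a constant. Substituting $\hat x(t)=\sum_n c_n v_n(t)$ and using independence of $w$ and $x$, the cross term collapses to $\sum_n 2\Re\{\bra v_n,\,g_n\ket\}$ with $g_n(t)\triangleq \int_0^T R_X(t,\tau)s_n(\tau)d\tau$; and by the orthonormalization from the previous step the energy term becomes simply $\sum_n \|v_n\|^2$. Thus minimizing the MSE is equivalent to minimizing $\sum_n\bigl(\|v_n\|^2-2\Re\bra v_n,g_n\ket\bigr)$ subject to \eqref{eq:s_const}.

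I would then minimize over the reconstruction kernels $v_n$ with $s_n$ held fixed. The scalar inequality $\|v\|^2-2\Re\bra v,g\ket \ge -\|g\|^2$ (which follows from completing the square, or from Cauchy--Schwarz combined with $\min_z z^2-2bz=-b^2$) is attained at $v=g$, giving the candidate \eqref{eq:v_opt} and reducing the problem to maximizing $\sum_{n=1}^N\|g_n\|^2$ over $\{s_n\}$ satisfying the constraint.

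The main obstacle, which I would isolate in a lemma (the content of Appendix~\ref{ap:max_MSE_sub_vn}), is this constrained maximization. Here I would invoke Mercer's theorem to write $R_X(t,\eta)=\sum_k \lambda_k \psi_k(t)\psi_k^*(\eta)$, expand $s_n=\sum_k \beta_{k,n}\psi_k$, and compute $\|g_n\|^2=\sum_k \lambda_k^2|\beta_{k,n}|^2$. The constraint \eqref{eq:s_const} then becomes $\sum_k(\lambda_k+\sigma_c^2)\beta_{k,m}^*\beta_{k,n}=\delta_{m,n}$, so the rescaled coefficients $\alpha_{k,n}\triangleq\sqrt{\lambda_k+\sigma_c^2}\,\beta_{k,n}$ form a matrix with orthonormal columns. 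The objective becomes $\sum_{n,k}\frac{\lambda_k^2}{\lambda_k+\sigma_c^2}|\alpha_{k,n}|^2$, a trace maximization with weights $w_k=\lambda_k^2/(\lambda_k+\sigma_c^2)$. Since $w_k$ is monotone increasing in $\lambda_k$ and the $\lambda_k$ are in decreasing order, a Ky~Fan type argument shows the maximum is attained precisely when the support of $(\alpha_{k,n})$ lies in $k\in\{1,\ldots,N\}$ with that block unitary. The simplest such choice, $\alpha_{k,n}=\delta_{k,n}$, yields $\beta_{k,n}=\delta_{k,n}/\sqrt{\lambda_n+\sigma_c^2}$, giving \eqref{eq:opt_sn}; substituting back into \eqref{eq:v_opt} and using $R_X\psi_n=\lambda_n\psi_n$ produces \eqref{eq:opt_vn}.
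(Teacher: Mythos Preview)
Your proposal is correct and follows essentially the same route as the paper: normalize the sampling kernels via \eqref{eq:s_const} to decorrelate the samples, expand the MSE, optimize over $v_n$ by completing the square to obtain \eqref{eq:v_opt}, and then pass to the eigenbasis of $R_X$ with the rescaling $\alpha_{k,n}=\sqrt{\lambda_k+\sigma_c^2}\,\beta_{k,n}$ so that the constraint becomes column-orthonormality and the objective a weighted trace. The only cosmetic difference is that the paper carries out the final maximization by an explicit exchange/contradiction argument rather than invoking a Ky~Fan-type principle, but the content is identical.
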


Interestingly, the optimal sampling and reconstruction functions in our noisy setting are similar to those dictated by the KLT\@. The only difference is that in the present scenario, the $n$th sample is shrunk by a factor of $\lambda_n/(\lambda_n+\sigma_c^2)$ prior to reconstruction. This ensures that the low-SNR measurements do not contribute to the recovery as much as their high-SNR counterparts. From the viewpoint of designing the sampling mechanism, however, this difference is of no importance.

As stated above, in practice one would generally favor nonlinear processing of the samples (namely, applying standard nonlinear FRI techniques) rather than a simple element-wise shrinkage. Thus, the importance of Theorem~\ref{th:Bayesian} for our purposes is in identifying that the eigenfunctions of $R_X(t,\tau)$ remain the optimal sampling kernels even in the noisy setting.

\subsection{Example: Sampling in a Subspace}

To demonstrate the utility of Theorem~\ref{th:Bayesian}, we now revisit the situation in which $x(t)$ is given by \eqref{eq:x_in_G} for some set of linearly independent functions $\{g_k(t)\}_{k=1}^K$ spanning a subspace $\GG\in L_2$. We assume that the coefficients $\bth=\{a_1,\ldots,a_K\}^T$ form a zero-mean random vector and denote its autocorrelation matrix by $\R_\bth$. In this case, the signal's autocorrelation function is given by
\begin{align}\label{eq:RxSubspace}
R_X(t,\eta) &= \E{x(t)x^*(\eta)} \nonumber\\
&= \E{\sum_{k=1}^K a_k g_k(t) \sum_{\ell=1}^K a_\ell^* g_k^*(\eta)} \nonumber\\
&= \sum_{k=1}^K\sum_{\ell=1}^K g_k(t) g_\ell^*(\eta) (\R_\bth)_{k,\ell}.
\end{align}
Consequently, the operator $R_X:L_2\rightarrow L_2$ defined by $(R_x h)(t)=\int_0^T R_x(t,\eta)h(\eta)d\eta$ can be expressed as
\begin{equation}\label{eq:GRG}
R_x = G\R_\bth G^*,
\end{equation}
where $G$ is the set transformation \eqref{eq:set} associated with $\{g_k\}_{k=1}^K$.

Now, let $\U$ be a unitary matrix and let $\D$ be a diagonal matrix, such that
\begin{equation}\label{eq:UDU}
\U\D\U^* = (G^*G)^{1/2}\R_\bth(G^*G)^{1/2}.
\end{equation}
Since the dimension of $\Ra{G}$ is $K$, the operator $R_X$ has at most $K$ nonzero eigenvalues $\{\lambda_k\}_{k=1}^K$. Let $\Psi$ denote the set transformation associated with the $N$ eigenfunctions $\{\psi_n\}_{n=1}^N$ corresponding to the $N$ largest eigenvalues, for some $N \le K$. Then, it can be shown that
\begin{equation}\label{eq:psi_n_subspace}
\Psi=G(G^*G)^{-1/2}\U
\end{equation}
and the corresponding eigenvalues are
\begin{equation}
\lambda_n = \D_{n,n}.
\end{equation}
To see this, note that according to \eqref{eq:psi_n_subspace}, $\Psi$ is an isometry, since
\begin{equation}
\Psi^*\Psi=\U^*(G^*G)^{-1/2}G^*G(G^*G)^{-1/2}\U=\U^*\U=\I_K.
\end{equation}
Furthermore, \eqref{eq:GRG} and \eqref{eq:UDU} imply that $R_X=\Psi\D\Psi^*$. Consequently
\begin{align}
R_X \Psi = \Psi\D\Psi^*\Psi = \Psi\D,
\end{align}
which proves the claim.

It is important to emphasize that the $K$ functions $\{\psi_n(t)\}_{n=1}^K$ span $\GG$. Therefore, if one is allowed to take $N=K$ samples, then the optimal choice is a set of kernels that span $\GG$. This conclusion is compatible with the CRB analysis of the previous sections. However, the advantage of the Bayesian viewpoint is that it allows us to identify the optimal sampling space when less than $K$ samples are allowed. For example, suppose that $\{g_n(t)\}$ are orthonormal, and the coefficients $\{a_n\}$ are uncorrelated. Then the optimal sampling space is the one spanned by the $N$ functions $\{g_n(t)\}$ corresponding to the $N$ largest-variance coefficients $\{a_n\}$.

A second example demonstrating the derivation of the optimal sampling kernels will be given in the next section.

\section{Application: Channel Estimation}
\label{se:app}

In this section, we focus on a specific application of FRI signals, namely, that of estimating a signal consisting of a number of pulses having unknown positions and amplitudes \cite{gedalyahu09, gedalyahu11, tur11}. More precisely, we consider periodic signals $x(t)$ of the form \eqref{eq:per}, which were discussed in Section~\ref{ss:union}. These are $T$-periodic pulse sequences, in which the pulse shape $g(t)$ is known, but the amplitudes $\{a_\ell\}$ and delays $\{t_\ell\}$ are unknown. After analyzing periodic signals of this type, we will also compare estimation performance in this case with the semi-periodic family \eqref{eq:semiper}, and attempt to explain the empirically observed differences in stability between these two cases.

By defining the $T$-periodic function $h(t) = \sum_{n \in \ZZ} g(t-nT)$, we can write $x(t)$ of \eqref{eq:per} as
\beq \label{eq:x(t) time delay}
x(t) = \sum_{\ell=1}^L a_\ell h(t - t_\ell).
\eeq
Our goal is now to estimate $x(t)$ from samples of the noisy process $y(t)$ of \eqref{eq:y=x+w}. As before, we will assume that only continuous-time noise is present in the system. Since $x(t)$ is $T$-periodic, it suffices to recover the signal in the region $[0,T]$. In particular, we would like to identify the optimal sampling kernels for this setting, and to compare existing algorithms with the resulting CRB in order to determine when the optimal estimation performance is achieved.

Let
\beq
\tilde{h}_k = \frac 1 T \bra h, \varphi_k \ket , \quad k \in \ZZ
\eeq
be the Fourier series of $h(t)$, where $\varphi_k(t) =  e^{j 2\pi kt/T}$. The Fourier series of $x(t)$ is then given by
\beq \label{eq:x tilde}
\tilde{x}_k \triangleq \frac{1}{T}\bra x, \varphi_k \ket = \tilde{h}_k \sum_{\ell=1}^L a_\ell e^{-j \frac{2\pi}{T} k t_\ell}, \quad k \in \ZZ.
\eeq
Let $\KK = \{ k \in \ZZ : \tilde{h}_k \ne 0 \}$ denote the indices of the nonzero Fourier coefficients of $h(t)$. Suppose for a moment that $\KK$ is finite. It then follows from \eqref{eq:x tilde} that $x(t)$ also has a finite number of nonzero Fourier coefficients. Consequently, the set $\XX$ of possible signals $x(t)$ is contained in the finite-dimensional subspace $\MM = \spn\{ \varphi_k \}_{k \in \KK}$. Therefore, as explained in Section~\ref{ss:nyquist equiv}, choosing the $N = |\KK|$ sampling kernels $\{s_n(t) = e^{-j 2\pi nt/T}\}_{n \in \KK}$ results in a sampled CRB which is equivalent to the continuous-time bound. This result is compatible with recent work demonstrating successful performance of FRI recovery algorithms using exponentials as sampling kernels \cite{gedalyahu11}.

Note, however, that this is a ``Nyquist-equivalent'' sampling scheme, i.e., the number of samples required $N=|\KK|$ is potentially much higher than the number of degrees of freedom $2L$ in the signal $x(t)$ (see Section~\ref{ss:nyquist equiv}). This provides a theoretical explanation of the empirically recognized fact that sampling above the rate of innovation improves the performance of FRI techniques in the presence of noise \cite{tur11}, a fact which stands in contrast to the noise-free performance guarantees of many FRI algorithms.

Moreover, if there exists an infinite number of nonzero coefficients $\tilde{h}_k$, then in general the set $\XX$ will not belong to any finite-dimensional subspace. Consequently, it will not be possible in this case for an algorithm based on a finite number of samples to achieve the performance obtainable from the complete signal $y(t)$. This occurs, for example, whenever the pulse $g(t)$ of \eqref{eq:per} is time-limited. In such cases, any increase in the sampling rate will potentially continue to reduce the CRB, although the sampled CRB will converge to the asymptotic value of $\rho_{T_0} \sigma_c^2$ in the limit as the sampling rate increases.

\subsection{Choosing the Sampling Kernels}

An important question in the current setting is how to choose the sampling kernels so as to achieve the best possible performance under a limited budget of samples. This can be done via the Bayesian analysis provided in Section~\ref{se:bayesian}. Assume, for example, that the time delays $\{t_\ell\}_{\ell=1}^L$ are independently drawn from a uniform distribution over the interval $[0,T]$. Furthermore, suppose that the amplitudes $\{a_\ell\}_{\ell=1}^L$ are mutually uncorrelated zero-mean random variables which are independent of the time delays and have variance $\sigma_a^2$. Then,
\begin{align}
R_X(t,\tau) &= \E{x(t)x^*(\tau)}\nonumber\\
&= \sum_{k=1}^L \sum_{\ell=1}^L \E{a_k a_\ell^*} \E{h(t - t_k)h^*(\tau - t_\ell)}\nonumber\\
&= \sigma_a^2 \sum_{\ell=1}^L \E{h(t - t_\ell)h^*(\tau - t_\ell)}\nonumber\\
&= \sigma_a^2 L \frac{1}{T} \int_0^T h(t - t_\ell)h^*(\tau - t_\ell) dt_\ell \nonumber\\
&= \sigma_a^2 L \sum_{k\in\ZZ} |\tilde{h}_k|^2 e^{j \frac{2\pi}{T} k(t-\tau)},
\end{align}
where we used Parseval's theorem. It is easily verified that the eigenfunctions of $R_X(t,\tau)$ are given by
\begin{equation}
\psi_n(t)=\frac{1}{\sqrt{T}}\,\,e^{j\frac{2\pi}{T}nt},\quad n\in\ZZ
\end{equation}
and the corresponding eigenvalues are
\begin{equation}
\lambda_n=L \sigma_a^2 T |\tilde{h}_n|^2,\quad n\in\ZZ.
\end{equation}
Therefore, the optimal set of $N$ sampling functions is
\beq \label{eq:opt sampling kernels}
s_n(t)=e^{j\frac{2\pi}{T}p_nt}, \quad n=1,\ldots,N
\eeq
where $p_n$ is the index of the $n$th largest Fourier coefficient $|\tilde{h}_{p_n}|$. The optimal linear recovery of $x(t)$ from the resulting samples is given by
\begin{equation}
\hat{x}(t) = \sum_{n=1}^N c_n \frac{L \sigma_a^2 |\tilde{h}_{p_n}|^2}{L \sigma_a^2 T |\tilde{h}_{p_n}|^2+\sigma_c^2}\,e^{j\frac{2\pi}{T}p_nt}.
\end{equation}
The performance of this estimator is poorer than state-of-the-art techniques, due to the restriction to linear reconstruction schemes. We recall that this technique is intended only for selecting the sampling kernels.

The above analysis again lends credence to the recently proposed time-delay estimation technique of Gedalyahu et al.\ \cite{gedalyahu11}, which makes use of complex exponentials as sampling functions. A disadvantage of this algorithm is that it can only handle a set of exponents with successive frequencies, while for general pulses, the indices of the $N$ largest Fourier coefficients may be sporadic. As we will see in Section~\ref{ss:pulse shape}, this limitation may result in deteriorated performance of the algorithm in some cases.

\subsection{Computing the CRB}
\label{ss:compute CRB}

Having identified the optimal sampling kernels \eqref{eq:opt sampling kernels}, we would now like to compute the CRB for estimating $x(t)$ from the resulting samples. In order to compare these results with the continuous-time CRB, we assume that no digital noise is added in the sampling process. However, the calculations described below can be adapted without difficulty to situations containing both continuous-time and digital noise.

We assume for simplicity that $h(t)$ and $\{a_\ell\}$ are real-valued. Nonetheless, the sampling kernels chosen above are complex-valued, implying that Theorem~\ref{th:crb sampled} cannot be directly applied. Yet since $h(t)$ is real-valued, we have $|\tilde{h}_k| = |\tilde{h}_{-k}|$, and consequently the optimal sampling kernels consist of complex conjugate pairs $e^{\pm j 2 \pi nt/T}$. Recall that the sampling kernels can be changed without affecting the CRB, as long as the subspace they span remains constant. Consequently, the CRB can be computed for the equivalent sampling kernels $\sin(2\pi nt/T)$ and $\cos(2\pi nt/T)$, which are real and can therefore be used in conjunction with the results of Section~\ref{se:sampled}. We note that since the transition to these real-valued kernels is unitary, the CRB will not change even if digital noise is added. To be specific without complicating the notation, we assume that $N$ is odd and that the DC component is included among the sampling kernels chosen in \eqref{eq:opt sampling kernels}. We can then define the equivalent set of kernels
\begin{align} \label{eq:fourier sampling}
\tilde{s}_0(t)                 &= 1, \notag\\
\tilde{s}_n(t)                 &= \cos(2\pi p_n t/T),   & n &= 1,\ldots,\frac{N-1}{2}, \notag\\
\tilde{s}_{n+\frac{N+1}{2}}(t) &= \sin(2\pi p_n t/T),   & n &= 1,\ldots,\frac{N-1}{2}.
\end{align}
We further define the parameter vector
\beq
\bth = (a_1, \ldots, a_L, t_1, \ldots, t_L)^T
\eeq
whose length is $K = 2L$.

Theorem~\ref{th:crb sampled} provides a two-step process for computing the CRB of the signal $x(t)$ from its samples. First, the FIM $\J_\bth^\mathrm{samp}$ for estimating $\bth$ is determined. Second, the formula \eqref{eq:th:crb sampled} is applied to compute the CRB\@. While Theorem~\ref{th:crb sampled} also provides a means for calculating $\J_\bth^\mathrm{samp}$, it is more convenient in the present setting to derive the FIM directly. This can be done by calculating the expectations $\mu_n$ of \eqref{eq:mu_n} and applying \eqref{eq:Jbth 1}. In our setting, $\mu_n = \bra x, \tilde{s}_n \ket$ are given by
\begin{align}
\mu_n                 &= \frac{\tilde{x}_{p_n} + \tilde{x}_{-p_n}}{2},     & n &= 0,\ldots,\frac{N-1}{2} \notag\\
\mu_{n+\frac{N+1}{2}} &= \frac{\tilde{x}_{p_n} - \tilde{x}_{-p_n}}{2j},     & n &= 1,\ldots,\frac{N-1}{2}
\end{align}
where $\{\tilde{x}_n\}$ are the Fourier coefficients of $x(t)$. These coefficients depend in turn on the parameter vector $\bth$, as shown in \eqref{eq:x tilde}. Substituting $\mu_n$ into \eqref{eq:Jbth 1} yields a closed-form expression for $\J_\bth^\mathrm{samp}$. Since the resulting formula is cumbersome and not very insightful, it is not explicitly written herein.

To obtain the sampled CRB, our next step is to compute the $2L \times 2L$ matrix
\beq
\M \triangleq \left(\pdht\right)^* \left(\pdht\right).
\eeq
The function $h_\bth : \RR^{2L} \rightarrow L_2$ maps a given parameter vector $\bth$ to the resulting signal $x(t)$ as defined by \eqref{eq:x(t) time delay}. Differentiating this function with respect to $\bth$, we find that the operator $\partial h_\bth / \partial \bth : \RR^{2L} \rightarrow L_2$ is defined by
\begin{align} \label{eq:pdht time delay}
\left(\pdht\right)\bv &= \bv_1 h(t-t_1) + \cdots + \bv_L h(t-t_L) \notag\\
&\hspace{5mm} - \bv_{L+1} a_1 h'(t-t_1) - \cdots - \bv_{2L} a_L h'(t-t_L)
\end{align}
for any vector $\bv \in \RR^{2L}$.

One may now compute the $ik$th element of $\M$ as
\begin{align}
M_{ik} = \e_i^* \left(\pdht\right)^* \left(\pdht\right) \e_k = \bra \pdht \e_i , \pdht \e_k \ket .
\end{align}
Thus, each element of $\M$ is an inner product between two of the terms in \eqref{eq:pdht time delay}. To calculate this inner product numerically for a given function $h(t)$, it is more convenient to use Parseval's theorem in order to convert the (continuous-time) inner product to a sum over Fourier coefficients. For example, in the case $1 \le i,k \le L$, we have
\begin{align}
M_{ik} &= \int_0^T h(t-t_i) h^*(t-t_k) dt \notag\\
       &= T\sum_{n\in\ZZ} \tilde{h}_n e^{-j 2\pi t_i n/T} \tilde{h}_n^* e^{j 2\pi t_k n/T} \notag\\
       &= T\sum_{n\in\ZZ} |\tilde{h}_n|^2 e^{-j 2\pi (t_k-t_i) n/T}, \quad 1 \le i,k \le L.
\end{align}
An analogous derivation can be carried out when $i$ or $k$ are in the complementary range $L+1, \ldots, 2L$.

Finally, having calculated the matrices $\J_\bth^{\mathrm{samp}}$ and $\M$, the CRB for sampled measurements is obtained using \eqref{eq:th:crb sampled}. We are now ready to compare this bound to the performance of practical estimators in some specific scenarios.

\subsection{Effect of the Pulse Shape}
\label{ss:pulse shape}

\begin{figure}
\subfigure[The pulse $g(t)$ is a filtered Dirac with $401$ Fourier coefficients.]
{
  \centerline{%
    \begin{overpic}{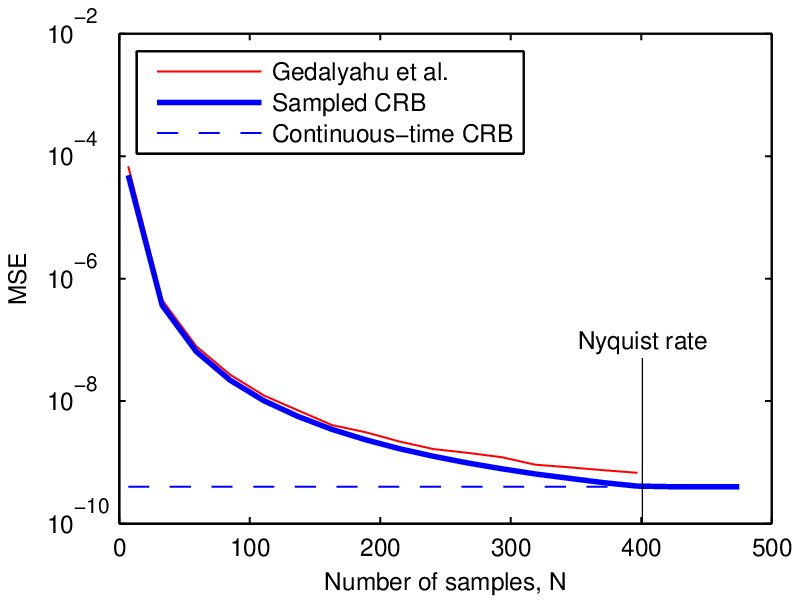}  
      \put(70,50){\includegraphics[scale=0.5]{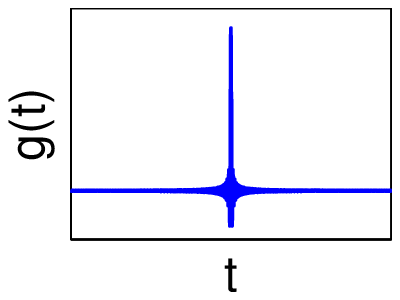}}  
    \end{overpic}
  }
  \label{fi:CRBvsN delta}
}
\\[1cm]
\subfigure[The pulse $g(t)$ contains $401$ nonzero Fourier coefficients which decrease monotonically with the frequency.]
{
  \centerline{%
    \begin{overpic}{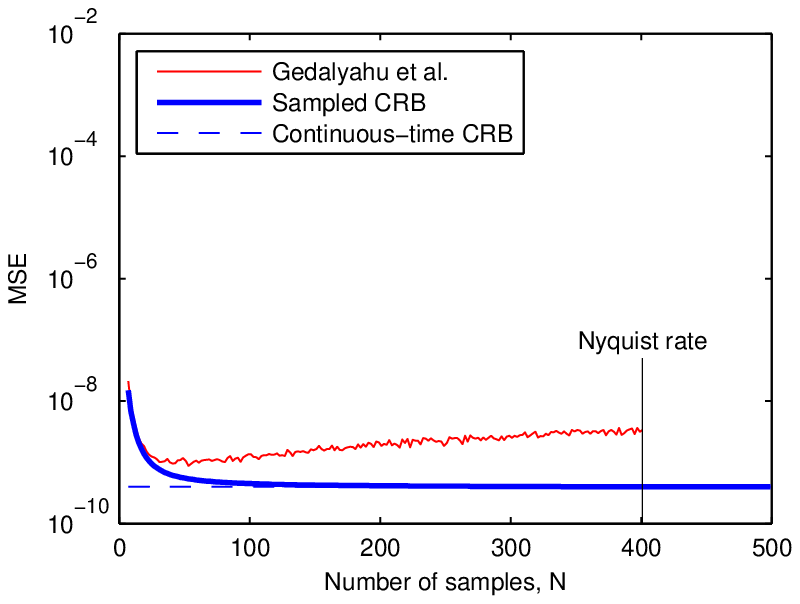}  
      \put(70,50){\includegraphics[scale=0.5]{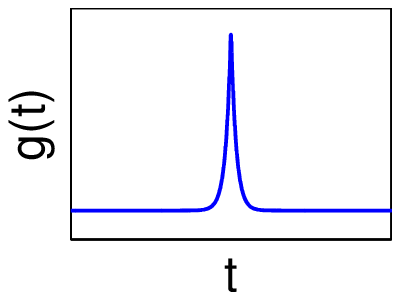}}  
    \end{overpic}
  }
  \label{fi:CRBvsN exp}
}
\\[1cm]
\subfigure[The pulse $g(t)$ is a filtered $\mathrm{rect}(\cdot)$ with $401$ Fourier coefficients.]
{
  \centerline{%
    \begin{overpic}{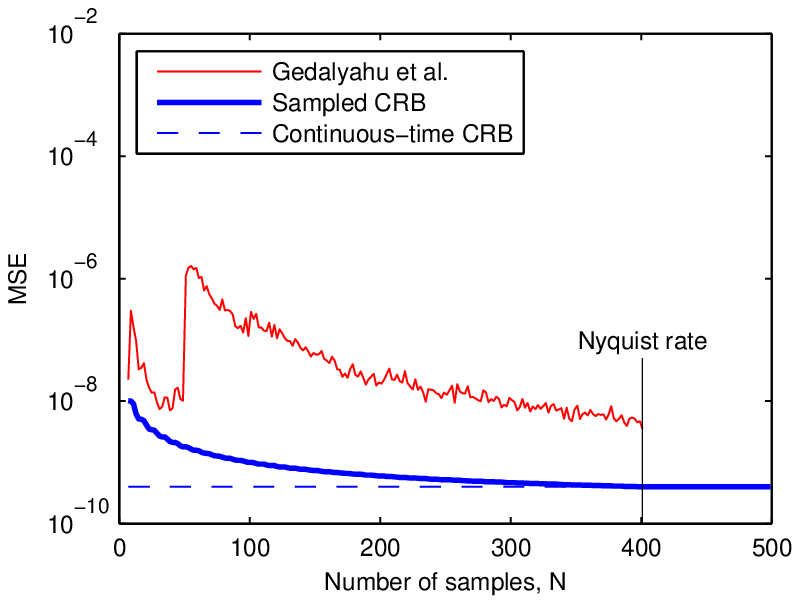}  
      \put(70,50){\includegraphics[scale=0.5]{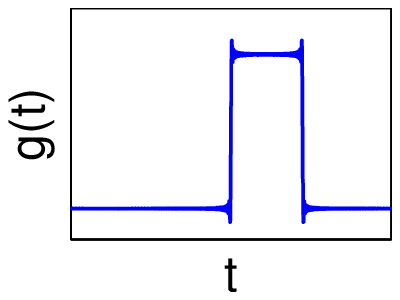}}  
    \end{overpic}
  }
  \label{fi:CRBvsN rect}
}
\caption{Comparison of the CRB and the performance of a practical estimator, as a function of the number of samples.}
\label{fi:CRBvsN}
\end{figure}

In Fig.~\ref{fi:CRBvsN}, we document several experiments comparing the CRB with the time-delay estimation technique of Gedalyahu et al.\ \cite{gedalyahu11}. Specifically, we sampled the signal $x(t)$ of \eqref{eq:per} using a set of exponential kernels, and used the matrix pencil method \cite{hua90} to estimate $x(t)$ from the resulting measurements. Since we are considering only continuous-time noise, applying an invertible linear transformation to the sampling kernels has no effect on our performance bounds (see Section~\ref{se:sampled}). The various kernels suggested in \cite{gedalyahu11} amount to precisely such an invertible linear transformation, and the same performance bound applies to all of these approaches. Moreover, under the continuous-time noise model, it can be shown that these techniques also exhibit the same performance. For the same reason, the performance reported here is also identical to the method of Vetterli et al.\ \cite{vetterli02}.

In our experiments, a signal containing $L=2$ pulses was constructed. The delays and amplitudes of the pulses were chosen randomly and are given by
\begin{align}
a_1 &= 0.3204, & t_1 &= 0.6678, \notag\\
a_2 &= 0.6063, & t_2 &= 0.9863.
\end{align}
Modifications of these parameters does not appear to significantly affect the reported results, except when the time delays are close to one another, a situation which will be discussed in depth in Section~\ref{ss:closely spaced}. The pulse $h(t)$ consisted of $|\KK|=401$ nonzero Fourier coefficients at positions $\KK = \{ -200, \ldots, 200 \}$. The CRB is plotted as a function of the number of samples $N$, where the sampling kernels are given by $s_n(t) = e^{j 2\pi nt/T}$ with $n \in \{-\lfloor N/2 \rfloor, \ldots, \lfloor N/2 \rfloor\}$. This is done because the matrix pencil method requires the sampling kernels to have contiguous frequencies.

In Fig.~\ref{fi:CRBvsN delta}, we chose $\tilde{h}_k=1$ for $-200 \le k \le 200$ and $\tilde{h}_k=0$ elsewhere; these are the low-frequency components of a Dirac delta function. The noise standard deviation was $\sigma_c = 10^{-5}$. In this case, for a fixed budget of $N$ samples, any choice of $N$ exponentials having frequencies in the range $-200 \le k \le 200$ is optimal according to the criterion of Section~\ref{se:bayesian}. As expected, the sampled CRB achieves the continuous-time bound $K \sigma_c^2$ when $N \ge |\KK|$. However, the CRB obtained at low sampling rates is higher by several orders of magnitude than the continuous-time limit. This indicates that the maxim of FRI theory, whereby sampling at the rate of innovation suffices for reconstruction, may not always hold in the presence of mild levels of noise. Indeed, if no noise is added in the present setting, then perfect recovery can be guaranteed using as few as $N=4$ samples; yet even in the presence of mild noise, our bounds demonstrate that performance is quite poor unless the number of samples is increased substantially. This result may provide an explanation for the previously observed numerical instability of FRI techniques \cite{vetterli02, tur11}.

As a further observation, we note that in this scenario, existing algorithms come very close to the CRB\@. Thus, the previously observed improvements achieved by oversampling are a result of fundamental limitations of low-rate sampling, rather than drawbacks of the specific technique used.

The same experiment is repeated in Fig.~\ref{fi:CRBvsN exp} with a pulse having Fourier coefficients $\tilde{h}_k = 1/(1 + 0.01 k^2)$. Since the Fourier coefficients decrease with $|k|$, in this case our choice of low-frequency sampling kernels is optimal. However, the SNR of the measurements $c_n$ decreases with $n$. As can be seen, this has a negative effect on the performance of the algorithm, which is not designed for high noise levels. Indeed, including low-SNR measurements causes the MSE not only to depart from the CRB, but eventually even to increase as more noisy samples are provided. In other words, one would do better to ignore the high-frequency measurements than to feed them to the recovery algorithm. Yet information is clearly present in these high-frequency samples, as indicated by the continual decrease of the CRB with increasing $N$. Thus, our analysis indicates that improved estimation techniques should be achievable in this case, in particular by careful utilization of low-SNR measurements.

The adverse effect of low-SNR measurements is exacerbated if, for a given $N$, one does not choose the $N$ largest Fourier coefficients. This is demonstrated in Fig.~\ref{fi:CRBvsN rect}. Here, the results of a similar experiment are plotted, in which $\tilde{h}_k = P \sinc(nP/T)$, $-200 \le k \le 200$. These are the $401$ lowest-frequency Fourier coefficients of a rectangular pulse having width $P$. In this case, the Fourier coefficients are no longer monotonically decreasing with $|k|$. Consequently, the sampling kernels $s_n(t) = e^{j 2\pi nt/T}$ with $n \in \{-\lfloor N/2 \rfloor, \ldots, \lfloor N/2 \rfloor\}$ do not correspond to the $N$ largest Fourier coefficients, and thus are not optimal. In particular, for the chosen parameters, $|\tilde{h}_{25}| = |\tilde{h}_{-25}|$ are considerably smaller than the rest of the coefficients. When $N \ge 50$, the corresponding measurements are included, causing the MSE to deteriorate significantly.

\subsection{Closely-Spaces Pulses}
\label{ss:closely spaced}

It is well-known that the estimation of pulse positions becomes ill-conditioned when several of the pulses are located close to one another. Intuitively, this is a consequence of the overlap between the pulses, which makes it more difficult to identify the precise location of each pulse. However, our goal is to estimate the signal $x(t)$ itself, rather than the positions of its constituent pulses. As we will see, for this purpose the effect of closely-spaced pulses is less clear-cut.

To study the effect of pulse position on the estimation error, we used a setup similar to the one of Fig.~\ref{fi:CRBvsN exp}, with the following differences. First, a higher noise level of $\sigma_c = 10^{-3}$ was chosen. Second, the signal consisted of $L=2$ pulses, with the first pulse at position $t_1=0.5$. The position of the second pulse was varied in the range $[0.3,0.7]$ to demonstrate the effect of pulse proximity on the performance. The setting was otherwise identical to that of Section~\ref{ss:pulse shape}. In particular, recall that $T=1$.

\begin{figure}
  \psfrag{t2}{{\small $t_2$}}
  \centerline{\includegraphics{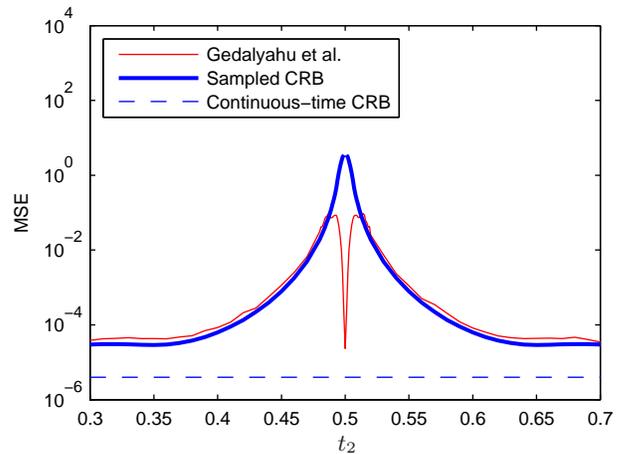}}  
  \caption{Comparison between the CRB and the performance of a practical estimator as a function of the pulse positions. The signal contains $L=2$ pulses, the first of which is located at $t_1=0.5$. The MSE is plotted as a function of the position of the second pulse.}
  \label{fi:t2}
\end{figure}

\begin{figure}
\subfigure[The spacing between the pulses is $0.04$.]
{\centerline{\includegraphics{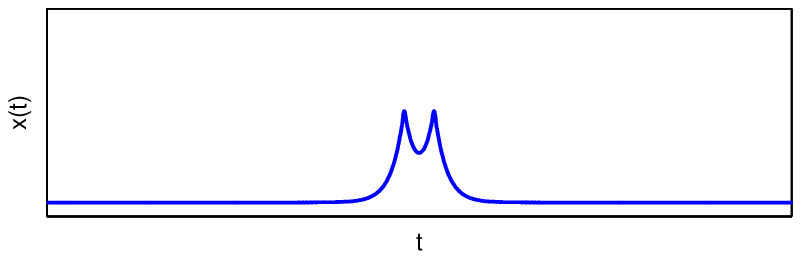}}
\label{fi:overlap 0.04}}  
\subfigure[The spacing between the pulses is $0.01$.]
{\centerline{\includegraphics{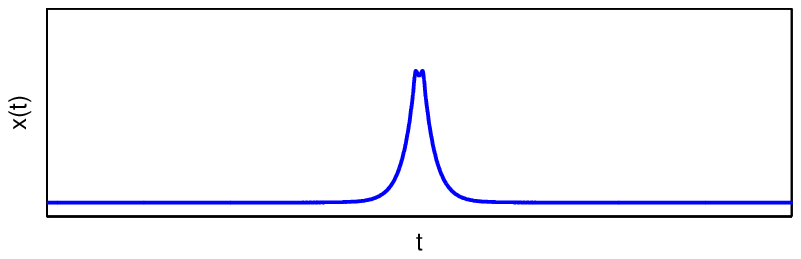}}
\label{fi:overlap 0.01}}  
\caption{Demonstration of the different levels of overlap between pulses.}
\label{fi:overlap}
\end{figure}

The results of this experiment are plotted in Fig.~\ref{fi:t2}, which documents both the values of the sampled CRB and the actual MSE obtained by the estimator of Gedalyahu et al.\ \cite{gedalyahu11}. The continuous-time CRB is also plotted, although, as is evident from Theorem~\ref{th:crb x from y}, this bound is a function only of the number of parameters determining the signal, and is therefore unaffected by the proximity of the pulses.

Several different effects are visible in Fig.~\ref{fi:t2}. First, as the two pulses begin to come closer, both the CRB and the observed MSE increase by several orders of magnitude; this occurs when $|t_1-t_2|$ is between about $0.15$ and $0.03$. (Of course, the precise distances at which these effects occur depend on the pulse width and other parameters of the experiment.) This level of proximity is demonstrated in Fig.~\ref{fi:overlap 0.04}. At this stage, the overlap between the pulses is sufficient to make it more difficult to estimate their positions accurately, but the separation between the pulses is still large, so that they are not mistaken for a single pulse.

As the pulses draw nearer each other, they begin to resemble a single pulse located at $(t_1+t_2)/2$ (see Fig.~\ref{fi:overlap 0.01}). Depending on the noise level, at some point the estimation algorithm will indeed identify the two pulses as one. Since our goal is to estimate $x(t)$ and not the pulse positions, such an ``error'' causes little deterioration in MSE\@. This is visible in Fig.~\ref{fi:t2} as the region in which the MSE of the practical algorithm ceases to deteriorate and ultimately decreases.

Interestingly, the CRB does not capture this improvement in performance. This failure is due to the fact that the CRB applies only to unbiased estimators, while the strategy utilized in \cite{gedalyahu11} becomes biased for closely-spaced pulses. For an estimator to be unbiased, it is required that the mean estimate, averaged over noise realizations, will converge to the true value of $x(t)$, which has a form similar to that of Fig.~\ref{fi:overlap 0.01}. The expectation of an estimator reconstructing a single pulse will not have the form of two closely-spaced pulses; such an estimator is thus necessarily biased. In other words, the discrepancy observed here results from the fact that in this case, biased techniques outperform the best unbiased approach.

\subsection{Non-Periodic and Semi-Periodic Signal Models}
\label{ss:nonper semiper}

As we have seen above, the reconstruction of signals of the form \eqref{eq:per} in the presence of noise is often severely hampered when sampled at or slightly above the rate of innovation. Rather than indicating a lack of appropriate algorithms, in many cases this phenomenon results from fundamental limits on the ability to recover such signals from noisy measurements. A similar effect was demonstrated \cite{tur11} in the non-periodic (or finite) pulse stream model \eqref{eq:nonper}. In fact, if one is allowed to sample a non-periodic pulse stream with arbitrary sampling kernels, then by designing kernels having sufficiently large time-domain support, one can capture all or most of the energy in the signal. This setting then essentially becomes equivalent to a periodic signal model \eqref{eq:per} in which the period is larger than the effective support of the pulse stream: One can imagine that the signal repeats itself beyond the sampled region, as this would not affect the measurements. Consequently, it is not surprising that the non-periodic model demonstrates substantial improvement in the presence of oversampling \cite{tur11}.

On the other hand, some types of FRI and union of subspace signals exhibit remarkable noise resilience, and do not appear to require substantial oversampling in the presence of noise \cite{gedalyahu09, mishali09}. As we now show, the CRB can be used to verify that such phenomena arise from a fundamental difference between families of FRI signals.

As an example, we compare the CRB for reconstructing the periodic signal \eqref{eq:per} with the semi-periodic signal \eqref{eq:semiper}. Recall that in the former case, each period consists of pulses having unknown amplitudes and time shifts. By contrast, in the latter signal, the time delays are identical throughout all periods, but the amplitudes can change from one period to the next.

While these are clearly different types of signals, an effort was made to form a fair comparison between the reconstruction capabilities in the two cases. To this end, we chose an identical pulse $g(t)$ in both cases. We selected the signal segment $[0,T_0]$, where $T_0 = 1$, and chose the signal parameters so as to guarantee an identical $T_0$-local rate of innovation. We also used identical sampling kernels in both settings: specifically, we chose the kernels \eqref{eq:fourier sampling} which measure the $N$ lowest frequency components of the signal.

\begin{figure}
  \centerline{\includegraphics{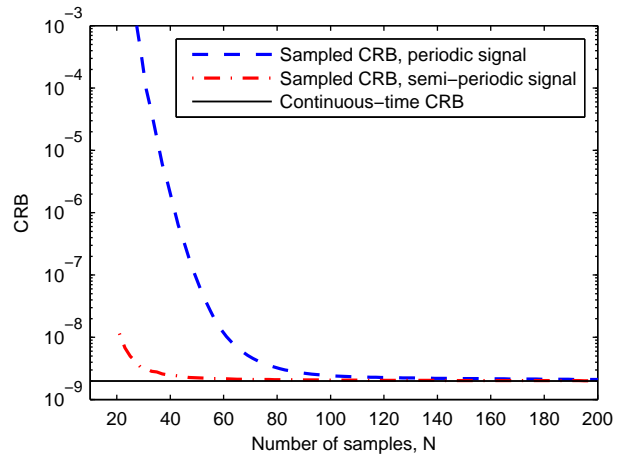}}  
  \caption{Comparison between the CRB for a periodic signal \eqref{eq:per} and a semi-periodic signal \eqref{eq:semiper}.}
  \label{fi:semiper}
\end{figure}

To simplify the analysis and focus on the fundamental differences between these settings, we will assume in this section that the pulses $g(t)$ are compactly supported, and that the time delays are chosen such that pulses from one period do not overlap with other periods. In other words, if the support of $g(t)$ is given by $[t_a, t_b]$, then we require
\beq \label{eq:no overlap req}
\min_\ell \{ t_\ell \} > t_a \quad \text{and} \quad \max_\ell \{ t_\ell \} < T - t_b.
\eeq
Specifically, we chose the pulse $g(t)$ used in Fig.~\ref{fi:CRBvsN exp}, which is compactly supported to a high approximation.

For the periodic signal, we chose $L=10$ pulses with random delays and amplitudes, picked so as to satisfy the condition \eqref{eq:no overlap req}. A period of $T=1$ was selected. This implies that the signal of interest is determined by $2L=20$ parameters ($L$ amplitudes and $L$ time delays).

To construct a semi-periodic signal with the same number of parameters, we chose a period of $T=1/9$ containing $L=2$ pulses. The segment $[0,T_0]$ then contains precisely $M=9$ periods, for a total of $20$ parameters. While it may seem plausible to require the same number of periods for both signals, this would actually disadvantage the periodic approach, as it would require the estimation of much more closely-spaced pulses.

The CRB for the periodic signal was computed as explained in Section~\ref{ss:compute CRB}, and the CRB for the semi-periodic signal can be calculated in a similar fashion. The results are compared with the continuous-time CRB in Fig.~\ref{fi:semiper}. Note that since the number of parameters to be estimated is identical in both signal models, the continuous-time CRB for the two settings coincides. Consequently, for a large number of measurements, the sampled bounds also converge to the same values. However, when the number of samples is closer to the rate of innovation, the bound on the reconstruction error for the semi-periodic signal is much lower than that of the periodic signal. As mentioned above, this is in agreement with previously reported findings for the two types of signals \cite{vetterli02, gedalyahu11, gedalyahu09}.

To find an explanation for this difference, it is helpful to recall that both signals can be described using the union of subspaces viewpoint (see Section~\ref{ss:union}). Each of the signals in this experiment is defined by precisely $20$ parameters, which determine the subspace to which the signal belongs and the position within this subspace. Specifically, the values of the time delays select the subspace, and the pulse amplitudes define a point within this subspace. Thus, in the above setting, the periodic signal contains $10$ parameters for selecting the subspace and $10$ additional parameters determining the position within it; whereas for the semi-periodic signal, only $2$ parameters determine the subspace while the remaining $18$ parameters set the location in the subspace. Evidently, identification of the subspace is challenging, especially in the presence of noise, but once the subspace is determined, the remaining parameters can be estimated using a simple linear operation (a projection onto the chosen subspace).  Consequently, if many of the unknown parameters identify the position within a subspace, estimation can be performed more accurately. This may provide an explanation for the difference between the two examined signal models.

As further evidence in support of this explanation, we recall from Section~\ref{ss:union} that the multiband signal model \eqref{eq:multiband} can also be viewed as a union of subspaces. Here, again, the parameters $\{ \omega_\ell \}_{\ell = 1}^L$ determining the subspace (i.e., the utilized frequency bands) are far fewer than the parameters $\{ a_\ell[n] \}$ selecting the point within the subspace (i.e., the content of each frequency band). In support of the proposed explanation, highly noise resistant algorithms can be constructed for the recovery of multiband signals \cite{mishali09, mishali09b}. An even more extreme case is the single subspace setting, exemplified by shift-invariant signals (Section~\ref{ss:sis}). In this case, all of the signal parameters are used to determine the position within the one possible subspace. As we have seen in Section~\ref{ss:subspace achieve crb}, in this case Nyquist-equivalent sampling at the rate of innovation achieves the continuous-time CRB\@.

\section{Conclusion}
\label{se:conclusion}

In this paper, we studied the inherent limitations in recovering FRI signals from noisy measurements. We derived a continuous-time CRB which provides a lower bound on the achievable MSE of any unbiased estimation method, regardless of the sampling mechanism. We showed that the rate of innovation $\rho_{T_0}$ is a lower bound on the ratio between the average MSE achievable by any unbiased estimator and the noise variance $\sigma_c^2$, \emph{regardless of the sampling method}. This stands in contrast to the noise-free interpretation of $\rho_{T_0}$ as the minimum sampling rate required for perfect recovery.

We next examined the CRB for estimating an FRI signal from a discrete set of noisy samples. We showed that the sampled bound is in general higher than the continuous-time CRB, and approaches it as the sampling rate increases. In general, the rate which is needed in order to achieve the continuous-time CRB is equal to the rate associated with the smallest subspace that encompasses all possible signal realizations. In particular, if a signal belongs to a union of subspaces, then the rate required to achieve the continuous-time bound is that associated with the sum of the subspaces. In some cases, this rate is finite, but in other cases the sum covers the entire space $L_2$ and no finite-rate technique achieves the CRB\@.

A consequence of these results is that oversampling can generally improve estimation performance. Indeed, our experiments demonstrate that sampling rates much higher than $\rho_{T_0}$ are required in certain settings in order to approach the optimal performance. Furthermore, these gains can be substantial: In some cases, oversampling can improve the MSE by several orders of magnitude. We showed that the CRB can be used to determine which estimation problems require substantial oversampling to achieve stable performance. As a rule of thumb, it appears that for union of subspace signals, performance is improved at low rates if most of the parameters identify the position within the subspace, rather than the subspace itself. Our analysis can also be used to identify cases in which no existing algorithm comes close to the CRB, implying that better approaches can be constructed. In particular, it seems that existing algorithms do not deal well with measurement sets having a wide dynamic range.

Lastly, we addressed the problem of choosing the sampling kernels. This was done by adopting a Bayesian framework, so that an optimality criterion can be rigorously defined. Using a generalization of the KLT, we showed that the optimal kernels are the eigenfunctions of the autocorrelation function of the signal. In the context of time-delay estimation, these kernels are exponentials with appropriately chosen frequencies. This choice coincides with recent FRI techniques \cite{gedalyahu09}.

\appendices

\section{Proof of Theorem~\ref{th:x from bth}}
\label{ap:th:x from bth}

The following notation will be used within this appendix. Let $\HH_1$ and $\HH_2$ be two measurable Hilbert spaces, and let $(\Omega, {\mathscr F}, P)$ be a probability space. Consider two random variables $u : \Omega \rightarrow \HH_1$ and $v : \Omega \rightarrow \HH_2$. Then, the notation $\E{uv^*}$ will be used to denote the linear operator $\HH_2 \rightarrow \HH_1$ such that, for any $h_1 \in \HH_1$ and $h_2 \in \HH_2$,
\beq \label{eq:def exp op}
\bra h_1, \E{uv^*} h_2 \ket_{\HH_1} = \E{ \bra h_1,u \ket_{\HH_1} \bra v,h_2 \ket_{\HH_2} }
\eeq
if the expectation exists for all $h_1$ and $h_2$.

We begin by stating two general lemmas which will be of use in the proof of Theorem~\ref{th:x from bth}.

\begin{lemma} \label{le:psd}
Let $\HH_1$ and $\HH_2$ be two Hilbert spaces, and consider the operators
\begin{align}
A : \HH_1 \rightarrow \HH_1, \notag\\
B : \HH_2 \rightarrow \HH_1, \notag\\
C : \HH_2 \rightarrow \HH_2.
\end{align}
Suppose $C$ is self-adjoint and invertible. Define the product Hilbert space $\HH_1 \times \HH_2$ in the usual manner, and suppose the operator $M : \HH_1\times\HH_2 \rightarrow \HH_1\times\HH_2$ defined by
\beq \label{eq:def M}
M \begin{pmatrix} h_1 \cr h_2 \end{pmatrix} = \begin{pmatrix} A h_1 + B h_2 \cr B^* h_1 + C h_2 \end{pmatrix}
\eeq
is positive semidefinite (psd). Then,
\beq \label{eq:le:psd}
A \succeq B C^{-1} B^*
\eeq
in the sense that the $\HH_1 \rightarrow \HH_1$ operator $A - BC^{-1}B^*$ is psd.
\end{lemma}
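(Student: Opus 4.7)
My plan is to prove Lemma~\ref{le:psd} by the standard Schur complement trick, adapted to the Hilbert space setting. The key idea is to choose a clever test element of $\HH_1 \times \HH_2$ and plug it into the quadratic form $\bra \cdot, M \cdot \ket$ that is nonnegative by hypothesis.

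Specifically, fix an arbitrary $h_1 \in \HH_1$ and set $h_2 = -C^{-1} B^* h_1$. Note that $h_2$ is a well-defined element of $\HH_2$: invertibility of $C$ (together with the bounded inverse theorem, or interpreted directly if $C^{-1}$ is taken to be bounded) ensures $C^{-1}$ is a bounded self-adjoint operator on $\HH_2$, and $B^*h_1 \in \HH_2$. The hypothesis that $M$ is psd gives
\begin{equation}
0 \le \left\langle \begin{pmatrix} h_1 \\ h_2 \end{pmatrix}, M \begin{pmatrix} h_1 \\ h_2 \end{pmatrix} \right\rangle = \bra h_1, A h_1 \ket + \bra h_1, B h_2 \ket + \bra h_2, B^* h_1 \ket + \bra h_2, C h_2 \ket .
\end{equation}

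Next I would simplify each cross term using (i) $\bra B^* u, v \ket_{\HH_2} = \bra u, B v \ket_{\HH_1}$ for the adjoint, and (ii) self-adjointness of $C$ (hence of $C^{-1}$), which makes the relevant scalars real and equal. A direct substitution shows that the two middle terms each contribute $-\bra h_1, B C^{-1} B^* h_1 \ket$, while the last term contributes $+\bra h_1, B C^{-1} B^* h_1 \ket$ (because $C C^{-1} = I$). Collecting yields
\begin{equation}
0 \le \bra h_1, A h_1 \ket - \bra h_1, B C^{-1} B^* h_1 \ket = \bra h_1, (A - B C^{-1} B^*) h_1 \ket .
\end{equation}
Since $h_1 \in \HH_1$ was arbitrary, this is exactly \eqref{eq:le:psd}.

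I do not anticipate any substantive obstacle; the argument is a quadratic completion. The only points requiring care are bookkeeping in complex Hilbert spaces, namely checking that $\bra h_1, B h_2 \ket$ and $\bra h_2, B^* h_1 \ket$ are complex conjugates so their sum is real, and using self-adjointness of $C^{-1}$ to identify $\bra C^{-1} B^* h_1, B^* h_1 \ket$ with $\bra h_1, B C^{-1} B^* h_1 \ket$. Boundedness of $C^{-1}$ ensures that $h_2 = -C^{-1} B^* h_1$ is a legitimate element of $\HH_2$ and that all inner products above are finite, so the finite-dimensional Schur complement calculation carries over verbatim.
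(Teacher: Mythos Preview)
Your proposal is correct and follows essentially the same approach as the paper's proof: both expand the psd condition on $M$, substitute the Schur-complement choice $h_2 = -C^{-1}B^*h_1$, and use self-adjointness of $C^{-1}$ to collapse the cross terms into $\bra h_1, (A - BC^{-1}B^*) h_1 \ket \ge 0$. The only cosmetic difference is that the paper first writes the quadratic form for arbitrary $(h_1,h_2)$ and then specializes, whereas you specialize immediately; your more explicit accounting of each term is, if anything, slightly cleaner.
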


\begin{proof}
Since $M$ is psd, we have for any $h_1 \in \HH_1$ and $h_2 \in \HH_2$
\beq
\bra \begin{pmatrix} h_1 \cr h_2 \end{pmatrix} , M \begin{pmatrix} h_1 \cr h_2 \end{pmatrix} \ket_{\HH_1\times\HH_2} \ge 0
\eeq
which implies
\beq \label{eq:le:psd prf 1}
\bra h_1, Ah_1 \ket_{\HH_1} + 2 \Re\left[\bra h_1, B h_2 \ket_{\HH_2}\right] + \bra b, Cb \ket_{\HH_2} \ge 0.
\eeq
Choosing $h_2 = -C^{-1}B^*h_1$, we have that $\bra h_1, B h_2 \ket_{\HH_2} = -\bra B^* h_1 , C^{-1} B^* h_1 \ket_{\HH_1}$, which is real since $C^{-1}$ is self-adjoint. It follows from \eqref{eq:le:psd prf 1} that
\beq
\bra h_1, A h_1 \ket_{\HH_1} - \bra h_1, BC^{-1}B^*h_1 \ket_{\HH_1} \ge 0
\eeq
which leads to \eqref{eq:le:psd}, as required.
\end{proof}

\begin{lemma} \label{le:ext cauchy schwarz}
Let $\HH_1$ and $\HH_2$ be two Hilbert spaces and let $(\Omega, {\mathscr F}, P)$ be a probability space. Let $u : \Omega \rightarrow \HH_1$ and $v : \Omega \rightarrow \HH_2$ be random variables, and suppose the expectations $\E{uu^*}$, $\E{uv^*}$, and $\E{vv^*}$ exist as linear operators as defined in \eqref{eq:def exp op}. If $\E{vv^*}$ is invertible, then
\beq \label{eq:le:ext cauchy schwarz}
\E{uu^*} \succeq \E{uv^*} \left( \E{vv^*} \right)^{-1} \E{vu^*}.
\eeq
\end{lemma}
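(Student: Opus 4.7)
The plan is to apply Lemma~\ref{le:psd} directly to the operator built from the second moments of $u$ and $v$. Concretely, I would set $A = \E{uu^*}$, $B = \E{uv^*}$, $C = \E{vv^*}$, and let $M:\HH_1\times\HH_2\rightarrow\HH_1\times\HH_2$ be the block operator of \eqref{eq:def M}. Since $C$ is assumed invertible, and since Lemma~\ref{le:psd} requires $C$ to be self-adjoint, the first small task is to observe the two elementary identities $\E{vv^*}^* = \E{vv^*}$ and $\E{uv^*}^* = \E{vu^*}$. Both follow directly from the defining relation \eqref{eq:def exp op} by using $\bra a,b\ket = \overline{\bra b,a\ket}$; for instance, $\bra h_2,\E{uv^*}^*h_1\ket = \overline{\bra h_1,\E{uv^*}h_2\ket} = \overline{\E{\bra h_1,u\ket \bra v,h_2\ket}} = \E{\bra h_2,v\ket\bra u,h_1\ket} = \bra h_2,\E{vu^*}h_1\ket$, as required.

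The heart of the proof is showing that $M$ is positive semidefinite. For arbitrary $(h_1,h_2)\in\HH_1\times\HH_2$, setting $a = \bra h_1,u\ket_{\HH_1}$ and $b = \bra v,h_2\ket_{\HH_2}$, the definition \eqref{eq:def exp op} gives
\begin{align}
\bra (h_1,h_2), M(h_1,h_2)\ket
&= \bra h_1,\E{uu^*}h_1\ket + 2\Re\bra h_1,\E{uv^*}h_2\ket + \bra h_2,\E{vv^*}h_2\ket \notag\\
&= \E{|a|^2} + 2\Re\E{ab} + \E{|b|^2}.
\end{align}
Since $\overline{b} = \bra h_2,v\ket$, expanding $|a+\overline{b}|^2 = |a|^2 + ab + \overline{a}\,\overline{b} + |b|^2 = |a|^2 + 2\Re(ab) + |b|^2$ shows that the preceding display equals $\E{|a+\overline{b}|^2}\ge 0$. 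Hence $M\succeq 0$.

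Having verified the hypotheses of Lemma~\ref{le:psd}, I would invoke it to obtain $A \succeq B C^{-1} B^*$, which, substituting the values of $A$, $B$, $C$ and using $B^* = \E{vu^*}$, is exactly the desired inequality \eqref{eq:le:ext cauchy schwarz}.

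The steps are essentially bookkeeping; the only place that requires care is the sesquilinearity conventions in going from \eqref{eq:def exp op} to a genuine nonnegative square, and the verification that $\E{uv^*}^*$ really equals $\E{vu^*}$ rather than some conjugated variant. Once these are handled, the result drops out of Lemma~\ref{le:psd} with essentially no additional work.
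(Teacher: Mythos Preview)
Your proposal is correct and follows essentially the same route as the paper's own proof: define $A=\E{uu^*}$, $B=\E{uv^*}$, $C=\E{vv^*}$, show the block operator $M$ is positive semidefinite by recognizing its quadratic form as the expectation of a squared modulus, and then invoke Lemma~\ref{le:psd}. The only difference is cosmetic: you make the self-adjointness of $C$ and the identity $B^*=\E{vu^*}$ explicit, and you are slightly more careful with the sesquilinear bookkeeping (writing the square as $|a+\overline{b}|^2$ rather than $|a+b|^2$), but the argument is otherwise identical.
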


\begin{proof}
Let us denote $A = \E{uu^*}$, $B = \E{uv^*}$, and $C = \E{vv^*}$ and define the linear operator $M : \HH_1 \times \HH_2 \rightarrow \HH_1 \times \HH_2$ as in \eqref{eq:def M}. From \eqref{eq:def exp op}, for any $h_1 \in \HH_1$ and $h_2 \in \HH_2$ we have
\begin{align}
&\bra \begin{pmatrix} h_1 \cr h_2 \end{pmatrix} , M \begin{pmatrix} h_1 \cr h_2 \end{pmatrix} \ket_{\HH_1\times\HH_2} \notag\\
&= \bra h_1, Ah_1 \ket_{\HH_1} + 2 \Re\left[ \bra h_1, Bh_2 \ket_{\HH_1}\right] + \bra b, Cb \ket_{\HH_2} \notag\\
&= \mathbb{E}\Big\{ \left|\bra h_1,u \ket_{\HH_1}\right|^2 + 2\Re\left[\bra h_1,u \ket_{\HH_1} \bra v,h_2 \ket_{\HH_2}\right] \notag\\
&\hspace{10mm} + \left|\bra h_2,v \ket_{\HH_2}\right|^2 \Big\} \notag\\
&= \E{ \left| \bra h_1,u \ket_{\HH_1} + \bra v,h_2 \ket_{\HH_2} \right|^2 } \notag\\
&\ge 0.
\end{align}
Thus $M$ is a psd operator. Invoking Lemma~\ref{le:psd} yields \eqref{eq:le:ext cauchy schwarz}, as required.
\end{proof}

We are now ready to prove Theorem~\ref{th:x from bth}.

\begin{proof}[Proof of Theorem~\ref{th:x from bth}]
Throughout the proof, let $\bth$ be a fixed parameter and consider all functions as implicitly dependent on $\bth$. Define the random variables
\begin{align}
u: \Omega &\rightarrow \HH:    & u(\omega) &= \hat{x}(y(\omega)) - h_\bth, \\
v: \Omega &\rightarrow \RR^K:  & v(\omega) &= \pd{\log p_\bth(y(\omega))}{\bth}.
\end{align}
We then have the linear operators $\E{vv^*}: \RR^K \rightarrow \RR^K$, $\E{uu^*}: \HH \rightarrow \HH$, and $\E{uv^*}: \RR^K \rightarrow \HH$, which satisfy
\begin{align}
\E{vv^*} &= \J_\bth, \label{eq:vv*=J}\\
\bra \varphi_i, \E{uu^*} \varphi_j \ket &= \E{ \bra \varphi_i,u \ket \bra u,\varphi_j \ket }, \\
\bra \varphi_i, \E{uv^*} \e_j \ket &= \E{ \bra \varphi_i,u \ket \pd{\log p_\bth(y)}{\theta_j} },
\end{align}
where $\{ \varphi_n \}_{n\in\ZZ}$ denotes a complete orthonormal basis for $\HH$. The operator $\E{uu^*}$ can be thought of as the covariance of $\hat{x}$, and is well-defined since, by \eqref{eq:finite energy}, $\hat{x}$ has finite variance. Indeed, we have
\beq
\sum_{i\in\ZZ} \bra \varphi_i, \E{uu^*} \varphi_i \ket = \E{\|\hat{x}-h_\bth\|_{L_2}^2} < \infty
\eeq
so that $\E{uu^*}$ is not only well-defined, but a trace class operator. Furthermore, $\E{vv^*} = \J_\bth$ is well-defined and invertible by Assumption~P\ref{ass:fim}. The operator $\E{uv^*}$ is thus also well-defined by virtue of the Cauchy--Schwarz inequality.

To prove the theorem, we will show that
\beq \label{eq:ops are equal}
\E{uv^*} = \pdht
\eeq
and then obtain the required result by applying Lemma~\ref{le:ext cauchy schwarz}. To demonstrate \eqref{eq:ops are equal}, observe that
\begin{align} \label{eq:gen crb prf 1}
& \bra \varphi_i, \E{uv^*} \e_j \ket
 = \E{ \bra \varphi_i,u \ket \pd{\log p_\bth(y)}{\theta_j} } \notag\\
&= \int \bra \varphi_i, \hat{x}(y)-h_\bth \ket \frac{1}{p(y;\bth)} \pd{p(y;\bth)}{\theta_j} p(y;\bth) \Pdy \notag\\
&= \int \bra \varphi_i, \hat{x}(y)-h_\bth \ket \lim_{\Delta \rightarrow 0} \frac{p(y;\bth+\Delta \e_j) - p(y;\bth)}{\Delta} \Pdy .
\end{align}
By Assumption~P\ref{ass:dominating func}, for any sufficiently small $\Delta > 0$ we have
\begin{align} \label{eq:dom conv}
&\left| \bra \varphi_i, \hat{x}(y)-h_\bth \ket \frac{p(y;\bth+\Delta \e_j) - p(y;\bth)}{\Delta} \right| \notag\\
&\quad\le \left| \bra \varphi_i, \hat{x}(y)-h_\bth \ket \right| q(y,\bth).
\end{align}
Let us demonstrate that the right-hand side of \eqref{eq:dom conv} is absolutely integrable. By the Cauchy--Schwarz inequality,
\begin{align} \label{eq:dom conv 2}
&\left( \int \left| \bra \varphi_i, \hat{x}(y)-h_\bth \ket \right| q(y,\bth) \Pdy \right)^2 \notag\\
&\quad\le \int \left|\bra \varphi_i, \hat{x}(y)-h_\bth \ket\right|^2 \Pdy \cdot \int q^2(y,\bth) \Pdy.
\end{align}
The rightmost integral in \eqref{eq:dom conv 2} is finite by virtue of \eqref{eq:q sq int}. As for the remaining integral, we have
\begin{align}
\int &\left|\bra \varphi_i, \hat{x}(y)-h_\bth \ket\right|^2 \Pdy \notag\\
& \le \int \|\hat{x}(y)-h_\bth\|^2 \Pdy \notag\\
& \stackrel{\text{(a)}}{\le} \int \left( \|\hat{x}(y)\| + \|h_\bth\| \right)^2 \Pdy \notag\\
& \stackrel{\text{(b)}}{\le} \int \|\hat{x}(y)\|^2 \Pdy + \|h_\bth\|^2 \int \Pdy \notag\\
&\qquad{}+ 2 \|h_\bth\| \left( \int \|\hat{x}(y)\|^2 \Pdy \right)^{1/2} \notag\\
&\stackrel{\text{(c)}}{<} \infty
\end{align}
where we have used the triangle inequality in (a), the Cauchy--Schwarz inequality in (b), and the assumption \eqref{eq:finite energy} that $\hat{x}$ has finite energy in (c). We conclude that \eqref{eq:dom conv} is bounded by an absolutely integrable function, and we can thus apply the dominated convergence theorem to \eqref{eq:gen crb prf 1}, obtaining
\begin{align} \label{eq:gen crb prf 2}
\bra \varphi_i, \E{uv^*} \e_j \ket &= \pd{}{\theta_j} \int \bra \varphi_i, \hat{x}(y) \ket p(y;\bth) \Pdy \notag\\
&\quad {} - \bra \varphi_i, h_\bth \ket \pd{}{\theta_j} \int p(y;\bth) \Pdy.
\end{align}
The second integral in \eqref{eq:gen crb prf 2} equals 1 and its derivative is therefore 0. Thus we have
\begin{align} \label{eq:gen crb prf 3}
\bra \varphi_i, \E{uv^*} \e_j \ket
&= \pd{ \E{ \bra \varphi_i, \hat{x}(y) \ket } }{\theta_j}
 = \pd{ \bra \varphi_i, h_\bth \ket }{\theta_j}.
\end{align}

On the other hand, note that the Fr\'echet derivative $\partial h_\bth / \partial \bth$ of \eqref{eq:pdht frechet} coincides with the G\^ateaux derivative of $h_\bth$. In other words, for any vector $\bv \in \RR^K$, we have
\beq \label{eq:pdht gateaux}
\pdht\bv = \lim_{\eps \rightarrow 0} \frac{h_{\bth + \eps\bv} - h_\bth}{\eps}.
\eeq
It follows that
\beq \label{eq:pdht gateaux 2}
\bra \varphi_i, \pdht \e_j \ket = \pd{ \bra \varphi_i, h_\bth \ket }{\theta_j}.
\eeq
Since $\E{uv^*}$ and $(\partial h_\bth/\partial \bth)^*$ are both linear operators, \eqref{eq:gen crb prf 3} and \eqref{eq:pdht gateaux 2} imply that the two operators are equal, demonstrating \eqref{eq:ops are equal}. Applying Lemma~\ref{le:ext cauchy schwarz} and using the results \eqref{eq:vv*=J} and \eqref{eq:ops are equal}, we have
\beq \label{eq:gen crb prf 4}
\E{(\hat{x}-h_\bth)(\hat{x}-h_\bth)^*} \succeq \left( \pdht \right) \J_\bth^{-1} \left( \pdht \right)^*.
\eeq
As we have seen, the left-hand side of \eqref{eq:gen crb prf 4} is trace class, and thus so is the right-hand side. Taking the trace of both sides of the equation, we obtain
\beq
\E{\|\hat{x}-h_\bth\|^2} \ge \Tr \left( \left( \pdht \right) \J_\bth^{-1} \left( \pdht \right)^* \right)
\eeq
which is equivalent to \eqref{eq:th:x from bth}, as required.
\end{proof}

\section{Maximization of \eqref{eq:MSE_sub_vn}}
\label{ap:max_MSE_sub_vn}

The task of maximizing \eqref{eq:MSE_sub_vn} is most easily accomplished by optimizing the coordinates of $s_n(t)$ in the orthonormal basis for $L_2[0,T]$ generated by the eigenfunctions of $R_X(t,\eta)$. Specifically, the function  $s_n(t)$ can be written as
\begin{align} \label{eq:sn}
s_n(t)=\sum_{k=1}^\infty \alpha_k^n \left(\lambda_k+\sigma_c^2\right)^{-\frac{1}{2}}\psi_k(t),
\end{align}
with $\{\psi_k(t)\}_{k=1}^\infty$ and $\{\lambda_k\}_{k=1}^\infty$ of \eqref{eq:eigenfunctions}. (The coefficients $(\lambda_k+\sigma_c^2)^{-1/2}$ are inserted since they simplify the subsequent analysis.) Now, by Mercer's theorem, $R_X(t,\eta)$ can be expressed as
\begin{equation}
R_X(t,\eta) = \sum_{\ell=1}^\infty \lambda_\ell \psi_\ell(t) \psi_\ell^*(\eta),
\end{equation}
where the convergence is absolute and uniform. Therefore
\begin{align} \label{eq:Rx sn}
\int_0^T \!\! &R_{X}(t,\tau)s_n(\tau)d\tau \nonumber\\
&= \int_0^T \!\!\sum_{k=1}^\infty \alpha_k^n \left(\lambda_k+\sigma_c^2\right)^{-\frac{1}{2}} \psi_k(\tau)\sum_{\ell=1}^\infty \lambda_\ell \psi_\ell(t) \psi_\ell^*(\tau) d\tau \nonumber\\
&= \sum_{k=1}^\infty \alpha_k^n \frac{\lambda_k}{\left(\lambda_k+\sigma_c^2\right)^{\frac{1}{2}}} \psi_k(t),
\end{align}
and consequently, by Parseval's theorem, \eqref{eq:MSE_sub_vn} is given by
\begin{align}\label{eq:num}
&\sum_{n=1}^N\int_0^T\left|\int_0^T  R_{X}(t,\tau)s_n(\tau)d\tau\right|^2dt  \notag\\
&\hspace{10mm}= \sum_{n=1}^N\int_0^T\left| \sum_{k=1}^\infty \alpha_k^n \frac{\lambda_k}{\left(\lambda_k+\sigma_c^2\right)^{\frac{1}{2}}} \psi_k(t)  \right|^2dt \notag\\
&\hspace{10mm}= \sum_{n=1}^N\sum_{k=1}^\infty |\alpha_k^n|^2 \frac{\lambda_k^2}{\lambda_k+\sigma_c^2}.
\end{align}
Similarly, using \eqref{eq:sn} and \eqref{eq:Rx sn}, we have
\begin{align} \label{eq:sm Rx sn}
\iint_{0}^{T} s_m^*(t) R_{X}(t,\tau) s_n(\tau) d\tau dt = \sum_{k=1}^\infty \alpha_k^n(\alpha_k^m)^* \frac{\lambda_k}{\lambda_k+\sigma_c^2}
\end{align}
and by \eqref{eq:sn}
\begin{align} \label{eq:sm sn}
\sigma_c^2 \int_{0}^{T} s_m^*(t) s_n(\tau) dt = \sum_{k=1}^\infty \alpha_k^n(\alpha_k^m)^* \frac{\sigma_c^2}{\lambda_k+\sigma_c^2}.
\end{align}

Combining \eqref{eq:sm Rx sn} and \eqref{eq:sm sn}, the set of constraints \eqref{eq:s_const} is translated to
\begin{align}\label{eq:s_con_alpha}
\sum_{k=1}^\infty \alpha_k^m(\alpha_k^n)^*=\delta_{m,n}
\end{align}
for every $m,n=1,\ldots,N$. Consequently, our problem has now been reduced to
\beq \label{eq:opt alpha}
\max_{\{\alpha_k^n\}} \sum_{n=1}^N\sum_{k=1}^\infty |\alpha_k^n|^2 \frac{\lambda_k^2}{\lambda_k+\sigma_c^2}
\quad \text{s.t. } \sum_{k=1}^\infty \alpha_k^m(\alpha_k^n)^*=\delta_{m,n}.
\eeq

We now show that the sequences $\{\alpha_k^n\}$ which solve \eqref{eq:opt alpha} must satisfy $\alpha_k^n=0$ for every $k>N$ and $n=1,\ldots,N$. To see this, assume to the contrary that the $n$th sequence satisfies $\alpha_\ell^n\neq0$ for some $\ell>N$. We can then replace this sequence by a sequence $\{\tilde{\alpha}_k^n\}_{k\in\ZZ}$ satisfying
\begin{equation}
|\tilde{\alpha}_k^n|^2=\begin{cases}|\alpha_k^n|^2+a_k^2 & 1\leq k\leq N \\
                                    0                    & k=\ell \\
                                    |\alpha_k^n|^2       & N<k \text{ and } k\neq\ell\end{cases}
\end{equation}
where $\sum_{k=1}^Na_k^2=|\alpha_\ell^n|^2$ (to ensure that $\sum_{k\in\ZZ}|\tilde{\alpha}_k^n|^2=1$). Such a set of coefficients $\{a_k\}_{k=1}^N$ can always be found since the $N$-term truncation of the remaining $N-1$ sequences cannot span $\CC^N$. With this sequence, the $n$th summand in the objective of \eqref{eq:opt alpha} becomes
\begin{align}
\sum_{k=1}^\infty |\tilde{\alpha}_k^n|^2 &\frac{\lambda_k^2}{\lambda_k + \sigma_c^2} = \sum_{k=1}^\infty |\alpha_k^n|^2 \frac{\lambda_k^2}{\lambda_k+\sigma_c^2} \nonumber\\
 &\hspace{1.75cm}+\left(\sum_{k=1}^N a_k^2 \frac{\lambda_k^2}{\lambda_k+\sigma_c^2} - |\alpha_\ell^n|^2\frac{\lambda_\ell^2}{\lambda_\ell+\sigma_c^2}\right)\nonumber\\
&\geq \sum_{k=1}^\infty |\alpha_k^n|^2 \frac{\lambda_k^2}{\lambda_k+\sigma_c^2} + \frac{\lambda_\ell^2}{\lambda_\ell+\sigma_c^2}\left(\sum_{k=1}^N a_k^2  - |\alpha_\ell^n|^2\right) \nonumber\\
&= \sum_{k=1}^\infty |\alpha_k^n|^2 \frac{\lambda_k^2}{\lambda_k+\sigma_c^2},
\end{align}
where we used the fact that $\lambda_k\geq\lambda_{\ell}$ for every $k<\ell$ and that $z^2/(a+z)$ is a monotone increasing function of $z$ for all $z>0$. This contradicts the optimality of $\{\alpha_k^n\}_{k\in\ZZ}$. Therefore, the set of sequences maximizing \eqref{eq:opt alpha} satisfy $\alpha_k^n=0$ for every $k>N$ and $n=1,\ldots,N$.

It remains to determine the optimal values of the first $N$ elements of each of the $N$ sequences $\{\alpha_k^n\}_{k\in\ZZ}$, $n=1,\ldots,N$. For this purpose, let $\A$ denote the $N\times N$ matrix whose entries are $\A_{k,n}=\alpha_k^n$ and let $\bLambda$ be a diagonal matrix with $\bLambda_{k,k}=\lambda_k^2/(\lambda_k+\sigma_c^2)$. Then, the constraint \eqref{eq:s_con_alpha} can be written as $\A^*\A=\I_N$, which is equivalent to $\A\A^*=\I_N$. Now, the objective in \eqref{eq:opt alpha} can be expressed as
\begin{align}
\sum_{n=1}^N\sum_{k=1}^\infty |\alpha_k^n|^2 \frac{\lambda_k^2}{\lambda_k+\sigma_c^2} &= \Tr\{\A^*\bLambda\A\}\nonumber\\
 &= \Tr\{\A\A^*\bLambda\} \nonumber\\
 &= \Tr\{\bLambda\},
\end{align}
which is independent of $\A$. Therefore, we conclude that any set of orthonormal sequences $\{\alpha_k^n\}_{k\in\ZZ}$, $n=1,\ldots,N$, whose elements vanish for every $k>N$ is optimal.

\bibliographystyle{IEEEtran}
\bibliography{IEEEabrv,zvika}

\end{document}